\documentclass{article}
\usepackage{geometry}
\usepackage[nospace,noadjust]{cite}
\usepackage{amsmath,amssymb,amsfonts,times}
\usepackage{graphicx}
\usepackage{textcomp}
\usepackage{xcolor}
\usepackage{subfigure}
\usepackage{epsfig}
\usepackage{multirow}
\usepackage{algorithm}
\usepackage{algorithmicx}
\usepackage{algpseudocode}
\usepackage{array}
\usepackage{epstopdf}
\usepackage{color}
\usepackage{diagbox}
\usepackage{bm}
\usepackage{bbm}
\usepackage{mathrsfs}
\usepackage{tcolorbox}
\usepackage{pdfsync}

\usepackage{url}
\usepackage[hidelinks]{hyperref}
\usepackage{amsmath,amsthm,amssymb,amsbsy}
\usepackage{paralist}
\usepackage{xcolor}
\usepackage{color}
\usepackage{comment}
\usepackage{multirow}
\usepackage[toc, page]{appendix}

\usepackage{fancyhdr}
\usepackage{cite}
\usepackage{cleveref}

\newtheorem{lemma}{Lemma}
\newtheorem{defi}{Definition}

\newtheorem{theorem}{Theorem}

\theoremstyle{remark}

\newcommand{\R}{\mathbb{R}}
\def \real    { \mathbb{R} }

\newcommand{\C}{\mathbb{C}}

\newcommand{\e}{\begin{equation}}
\newcommand{\ee}{\end{equation}}
\newcommand{\en}{\begin{equation*}}
\newcommand{\een}{\end{equation*}}
\newcommand{\eqn}{\begin{eqnarray}}
\newcommand{\eeqn}{\end{eqnarray}}
\newcommand{\bmat}{\begin{bmatrix}}
\newcommand{\emat}{\end{bmatrix}}

\DeclareMathAlphabet\mathbfcal{OMS}{cmsy}{b}{n}
\renewcommand{\P}[1]{\operatorname{\mathbb{P}}\left(#1\right)}

\newcommand{\vct}[1]{\boldsymbol{#1}}
\newcommand{\mtx}[1]{\boldsymbol{#1}}

\newcommand{\<}{\langle}
\renewcommand{\>}{\rangle}

\newcommand{\trace}{\operatorname{trace}}

\newcommand{\rank}{\operatorname{rank}}

\newcommand{\set}[1]{\mathbb{#1}}

\DeclareMathOperator*{\argmin}{\text{arg~min}}

\newcommand{\wh}{\widehat}

\newcommand{\wt}{\widetilde}
\newcommand{\ol}{\overline}
\newcommand{\ul}{\underline}
\newcommand{\nqbit}{n}

\newcommand{\innerprod}[2]{\left\langle #1,  #2 \right\rangle}

\newcommand{\vf}{\vct{f}}

\newcommand{\vp}{\vct{p}}

\newcommand{\vpsi}{\vct{\psi}}

\newcommand{\veta}{\vct{\eta}}

\newcommand{\vrho}{\vct{\rho}}

\newcommand{\mA}{\mtx{A}}
\newcommand{\mB}{\mtx{B}}

\newcommand{\mF}{\mtx{F}}

\newcommand{\mX}{\mtx{X}}

\newcommand{\mId}{{\bf I}}

\newcommand{\setF}{\set{F}}

\newcommand{\setX}{\set{X}}

\graphicspath{{./figs/}}

\newlength{\imgwidth}
\newboolean{twoColVersion}
\setboolean{twoColVersion}{false}
\newcommand{\twoCol}[2]{\ifthenelse{\boolean{twoColVersion}} {#1} {#2} }

\usepackage{mathtools}

\title{\LARGE \bf A Unified Framework for Sample Complexity of Structured \\ Quantum State Tomography under Noisy Observations}

\author{Zhen Qin\thanks{Zhen Qin is with the Michigan Institute for Computational Discovery and Engineering, Department of Electrical Engineering and Computer Science and Department of Statistics, University of Michigan, Ann Arbor, MI 48109 USA. (e-mail: zhenqin@umich.edu).}}

\begin{document}

\maketitle

\begin{abstract}
Quantum state tomography (QST) has attracted considerable attention due to its fundamental role in quantum information processing. Although substantial progress has been made in reducing the sample complexity of structured QST, the current theoretical understanding remains fragmented across different structured quantum-state classes and reconstruction models. Moreover, existing sample complexity analyses almost exclusively assume ideal measurements and therefore do not account for realistic state preparation and measurement noise. In this paper, we develop a unified theoretical framework for analyzing the sample complexity of structured QST under noisy observations arising from state preparation noise, measurement noise, and finite-shot statistical noise. The proposed framework applies to a broad family of structured quantum-state classes, including general mixed states, sparse states, low-rank states, matrix product states (MPSs), matrix product operators (MPOs), projected entangled-pair states (PEPSs), and projected entangled-pair operators (PEPOs), while further introducing physically consistent structured models---including low-rank and sparse states, low-rank MPOs (LR-MPOs), and low-rank PEPOs (LR-PEPOs)---that simultaneously exploit low-dimensional structures and preserve the physical constraint. Within this framework, we derive unified non-asymptotic sample complexity guarantees for two constrained least-squares estimators under noisy observations: a noise-aware estimator that incorporates the calibrated noise model and a noise-unaware estimator based on the ideal Born measurement model. For the noise-aware estimator, we derive unified trace-norm recovery guarantees that explicitly characterize the dependence of the sample complexity on three fundamental quantities: the complexity of the underlying structured state class, the complexity of the measurement ensemble, and the state preparation and measurement noise levels. For the noise-unaware estimator, we establish a unified non-asymptotic recovery guarantee consisting of a statistical error term and an additional deterministic bias term arising from the mismatch between the assumed reconstruction model and the noisy observation process. While the statistical error preserves the same dependence on the structured state complexity and the measurement complexity as in the noise-aware setting, the deterministic bias is independent of the number of state copies and therefore cannot be eliminated by increasing the measurement budget.
\end{abstract}


\section{Introduction}
\label{sec:introduction}

Quantum state tomography (QST) provides a fundamental framework for characterizing unknown quantum states and plays a central role in quantum information processing, quantum computing, and quantum sensing~\cite{bertrand1987tomographic,vogel1989determination,leonhardt1995quantum,hradil1997quantum,james2001measurement}. For an $n$-qudit quantum system, where each qudit has local dimension $d$, the quantum state is represented by a density matrix $\vrho\in\mathbb{C}^{d^n\times d^n}$. Without any structural assumption, accurately estimating $\vrho$ requires a number of state copies that scales exponentially with the system size. In particular, under independent measurement schemes, achieving a bounded reconstruction error in trace norm requires at least $O(d^{3n})$ state copies~\cite{haah2017sample}. Such exponential sample complexity rapidly becomes prohibitive for modern quantum devices, whose system sizes have already exceeded one hundred qubits~\cite{preskill2018quantum,arute2019quantum,chow2021ibm}.

To overcome this exponential bottleneck, a substantial body of recent work has focused on exploiting low-dimensional structures that naturally arise in physically relevant quantum systems. Typical examples include pure or nearly pure states~\cite{bartlett2007reference}, low-temperature thermal states~\cite{hartmann2004existence}, a family of generalized GHZ states~\cite{jameson2024optimal}, one-dimensional quantum many-body systems~\cite{eisert2010}, systems with decaying long-range interactions~\cite{pirvu2010matrix}, and states generated by noisy intermediate-scale quantum devices~\cite{noh2020efficient}. These structural assumptions substantially reduce the effective degrees of freedom of the underlying quantum state and consequently improve the sample complexity of QST. Among the various structured models, low-rank density matrices and tensor-network representations have emerged as two of the most widely studied paradigms. For low-rank quantum states with $\rank(\vrho)=r$, the required number of state copies can be reduced to $O(d^n r^2)$ while maintaining rigorous reconstruction guarantees~\cite{flammia2012quantum,voroninski2013quantum,haah2017sample,kueng2017low,guctua2020fast,francca2021fast,qin2026optimal}. Although this represents an exponential improvement over unstructured tomography, the dependence on the Hilbert-space dimension $d^n$ remains significant for large-scale quantum systems. To overcome this limitation, tensor-network representations exploit locality and limited entanglement, replacing exponential complexity with polynomial scaling for many physically relevant quantum states. Building on these representations, a growing body of work has established polynomial sample complexity guarantees for tensor-network quantum state tomography, reducing the required number of state copies to $O(\textup{poly}(n))$ while achieving accurate recovery in Frobenius norm~\cite{qin2024quantum,qin2024sample,qin2025enhancing,tang2025sketch,votto2026learning,qin2026quantum,cai2026online}.

Despite these advances, the current theoretical understanding of structured QST remains incomplete in several important aspects. As summarized in the recent review~\cite{qin2026statistical}, existing theoretical guarantees are typically established for specific reconstruction algorithms and individual state models, making it difficult to compare different structural assumptions under a common theoretical framework. Furthermore, most existing analyses of tensor-network tomography are developed under the Frobenius norm, whereas the trace norm is the standard metric for quantifying the distinguishability of quantum states. Moreover, the physical constraint has received relatively limited theoretical attention in tensor-network tomography, and several practically important structured state classes, such as sparse quantum states, have not yet been systematically investigated from a sample complexity perspective.

Beyond these limitations in modeling structured state classes, existing sample complexity analyses almost exclusively account for finite-shot statistical noise while assuming an ideal measurement process. In practical quantum devices, however, state preparation and measurement are inevitably corrupted by hardware imperfections. Depolarizing noise has consequently become one of the standard models for characterizing state preparation and measurement imperfections in practical QST~\cite{gross2010quantum,farooq2022robust,ivanova2023optimal,aasen2024readout,winderl2024quantum,de2024quantum}. Despite its practical importance, the theoretical understanding of sample complexity under depolarizing noise remains very limited. Existing theoretical results are largely restricted to noisy Pauli shadow tomography~\cite{cotler2026noisy}, whereas comparable guarantees for QST are still lacking. These observations naturally lead to the following fundamental question.

\begin{tcolorbox}[colback=white,left=1mm,top=1mm,bottom=1mm,right=1mm,boxrule=.3pt]
\centering
{\bf Question}: Can one develop a unified theoretical framework that characterizes the sample complexity of structured QST under realistic noisy observations?
\end{tcolorbox}

In this paper, we answer this question affirmatively. Specifically, we establish a unified theoretical framework for analyzing the sample complexity of structured QST under realistic noisy observations arising from state preparation noise, measurement noise, and finite-shot statistical noise. Our framework applies to a broad family of structured quantum-state classes, including general mixed states, sparse states, low-rank states, matrix product states (MPSs), matrix product operators (MPOs), projected entangled-pair states (PEPSs), and projected entangled-pair operators (PEPOs). To simultaneously exploit low-dimensional structures and preserve the physical constraint, we further propose several physically consistent structured models, including low-rank and sparse states, low-rank MPOs (LR-MPOs), and low-rank PEPOs (LR-PEPOs). A unified covering-number analysis is developed for all considered structured state classes, providing a common characterization of their intrinsic complexities.

Building upon this framework, we establish unified non-asymptotic sample complexity guarantees for two constrained least-squares estimators under noisy observations: a noise-aware estimator that incorporates the known noise model and a noise-unaware estimator that ignores the noise effects. For the noise-aware estimator, which explicitly incorporates the calibrated state preparation and measurement noise into the reconstruction model, we derive unified trace-norm recovery guarantees for all considered structured quantum-state classes. The resulting sample complexity depends on three fundamental quantities: the logarithm of the covering number of the underlying structured state class, which captures its intrinsic degrees of freedom; a normalized measurement complexity parameter, which characterizes the statistical efficiency of the measurement ensemble; and an explicit multiplicative factor determined by the state preparation and measurement noise levels. Moreover, our framework provides explicit evaluations of the measurement complexity parameters for spherical $3$-designs, $\delta$-approximate spherical $3$-designs, unitary $3$-designs, and Haar-random projective measurements, thereby quantitatively comparing the statistical efficiency of different measurement ensembles.

We further investigate the practically important noise-unaware estimator, which reconstructs the quantum state using the ideal Born measurement model without explicit knowledge of the state preparation and measurement noise. For this statistically misspecified estimator, we establish a unified non-asymptotic reconstruction guarantee showing that the estimation error naturally decomposes into a statistical error term and an additional deterministic bias term. While the statistical error term maintains a similar dependence on the covering number and measurement complexity as in the noise-aware setting, the noise-unaware estimator introduces an additional deterministic bias term that does not vanish with increasing numbers of state copies. This term captures the error floor caused by the mismatch between the assumed reconstruction model and the actual noisy observation process, and is determined by the state preparation and measurement noise levels as well as the measurement properties. Consequently, the proposed analysis provides a quantitative characterization of the fundamental performance gap between noise-aware and noise-unaware structured QST under realistic noisy observations.

\subsection{Notation}
\label{sec:notation}

We use bold capital letters (e.g., $\bm{X}$) to denote matrices,  bold lowercase letters (e.g., $\bm{x}$) to denote column vectors, and italic letters (e.g., $x$) to denote scalar quantities.  For a positive integer $K$, $[K]$ denotes the set $\{1,\dots, K \}$. The superscripts $(\cdot)^\top$ and $(\cdot)^\dagger$ denote the transpose and Hermitian transpose, respectively. For two matrices $\mA,\mB$ of the same size, $\innerprod{\mA}{\mB} = \trace(\mA^\dagger\mB)$ denotes the inner product between them.
For two positive quantities $a,b\in \real$, the inequality $b\lesssim a$ or $b = O(a)$ means $b\leq c a$ for some universal constant $c$; likewise, $b\gtrsim a$ or $b = \Omega(a)$ represents $b\ge ca$ for some universal constant $c$. Equivalently, $a = \Theta(b)$ means that $a$ and $b$ are of the same order up to universal constant factors.

\section{Quantum System Characterization}
\label{sec:quantum system}

\subsection{Quantum States under State Preparation Noise}
\label{sec:quantum_states}

The state of a quantum system is described by a density operator, which encodes the complete statistical properties of the system. In general, a quantum state may be described as a statistical ensemble $\{(\alpha_i, \vpsi_i)\}$, where each pure state $\vpsi_i \in \mathbb{C}^{d^n}$ occurs with probability $\alpha_i$ and satisfies $\|\vpsi_i\|_2 = 1$. The corresponding density operator is given by
\begin{eqnarray}
\label{mixed state density matrix}
\vrho = \sum_i \alpha_i \vpsi_i \vpsi_i^\dagger \in \C^{d^n \times d^n},
\end{eqnarray}
with $\sum_i \alpha_i = 1$. By construction, every density operator is positive semidefinite and has unit trace. Accordingly, we define the set of physical quantum states as
\begin{eqnarray}
    \label{Definition of physical set rho}
    \setX_{\textup{PHY}}= \Big\{ \vrho\in\C^{d^n\times d^n}:  \ \vrho \succeq {\bm 0}, \trace(\vrho) = 1    \Big\}.
\end{eqnarray}
The special case $\rank(\vrho)=1$ corresponds to a pure state, while higher-rank operators represent genuinely mixed states.

In idealized models, the prepared quantum state is assumed to coincide with the target state. In practice, however, state preparation and evolution are affected by environmental interactions, control imperfections, and decoherence, leading to deviations from the ideal description. Such deviations are naturally modeled by quantum channels acting on density operators. A noisy quantum state can be written as
\begin{equation}
\widetilde{\vrho} = \mathcal{E}(\vrho),
\end{equation}
where $\mathcal{E}$ is a completely positive trace-preserving map. Depending on the underlying physical mechanism, quantum channels capture dephasing, amplitude damping, depolarization, and other incoherent noise processes, and constitute a fundamental source of error in quantum state characterization and tomography.

Among various channel models, the depolarizing channel \cite{gross2010quantum,farooq2022robust,ivanova2023optimal,aasen2024readout,winderl2024quantum,de2024quantum,cotler2026noisy, nielsen2010quantum,wilde2013quantum} is widely adopted as a canonical benchmark due to its simplicity and isotropic nature. Under a depolarizing channel with noise strength $\lambda \in [0,1]$, the noisy quantum state is given by
\begin{equation}
\widetilde{\vrho}
=
(1-\lambda)\vrho
+
\frac{\lambda}{d^n}\mId,
\label{eq:depolarizing_channel}
\end{equation}
where $\mId\in\R^{d^n\times d^n}$ denotes the identity operator. This model can be interpreted as replacing the ideal state with the maximally mixed state with probability $\lambda$, thereby reducing distinguishability as noise increases. Throughout this paper, we adopt the depolarizing channel as a representative model for state preparation noise, which enables a tractable analysis while preserving the essential effects of channel-induced degradation.

\subsection{Quantum Measurements under Measurement Noise}
\label{sec:quantum measurements}

To characterize an unknown quantum state, one performs measurements on a large ensemble of identically prepared copies of the system. The most general class of physically realizable measurements is described by \emph{Positive Operator-Valued Measures} (POVMs)~\cite{nielsen2002quantum}, which we formally introduce below.

\begin{defi}[POVM~\cite{nielsen2002quantum}]
A Positive Operator-Valued Measure (POVM) is a collection of PSD matrices $\{\mA_1,\ldots,\mA_K\}$ satisfying
\begin{eqnarray}
\label{The defi of POVM 1}
\sum_{k=1}^K \mA_k = \mId,
\end{eqnarray}
where $\mId$ is the identity matrix.
Each POVM element $\mA_k$ is associated with a possible outcome of a quantum measurement, and the probability $p_k$ of detecting the $k$-th outcome when measuring the density operator $\vrho$ is given by
\begin{eqnarray}
\label{The defi of POVM 2}
p_k = \innerprod{\mA_k}{\vrho} \equiv \trace(\mA_k\vrho),
\end{eqnarray}
where $\sum_{k=1}^Kp_k=1$ due to \cref{The defi of POVM 1} and the fact that $\trace(\vrho) = 1$.
\end{defi}
An important class of measurements is given by informationally complete POVMs, which are sufficient to uniquely determine any quantum state. However, many practically relevant measurement schemes, such as rank-one projective measurements, are not informationally complete on their own. As a result, a single measurement setting is generally insufficient to fully reconstruct the underlying state. To address this limitation, it is standard to employ multiple distinct POVMs, thereby aggregating complementary information from different measurement settings. Specifically, in this paper we consider a collection of $Q$ POVMs $\{\mA_{q,1},\dots, \mA_{q,K}  \}_{q=1}^{Q}$, where each POVM satisfies
\begin{eqnarray}
\sum_{k=1}^{K}\mA_{q,k} = \mId, \quad \forall q \in [Q].
\end{eqnarray}

In the idealized measurement model, the POVM elements $\{\mA_{q,k}\}$ are assumed to be implemented exactly as specified, and the resulting outcome probabilities are determined solely by the Born rule applied to the quantum state. In practice, however, measurement devices are also subject to systematic imperfections and environmental disturbances, leading to deviations between the ideal POVM and the actually implemented measurement operators. Such imperfections can be naturally modeled as noise acting at the level of POVM elements, in complete analogy with state preparation noise. Inspired by the depolarizing channel model for state preparation noise, we model measurement imperfections via an analogous isotropic perturbation acting directly on the POVM elements:
\begin{eqnarray}
\label{The noisy POVM with depolarizing noise}
\wt\mA_{q,k} = (1-a) \mA_{q,k} + \frac{a}{K}\mId,
\end{eqnarray}
where $a \in [0,1]$ quantifies the strength of the measurement noise. This model captures the effect of uniform mixing with the identity operator, and preserves the POVM normalization structure.

Under this noisy measurement model, we now describe the induced probability distributions. Let $\wt{\vrho}$ denote the (possibly noisy) quantum state. For each measurement setting $q \in [Q]$, we define the corresponding probability vector
\begin{eqnarray}
\label{The noisy probability vector}
\vp_q =  \begin{bmatrix} p_{q,1} \\ \vdots \\ p_{q,K}  \end{bmatrix} = \begin{bmatrix} \< \wt{\mA}_{q,1}, \wt{\vrho} \> \\ \vdots \\ \< \wt{\mA}_{q,K}, \wt{\vrho} \>  \end{bmatrix} \in \R^K.
\end{eqnarray}

By stacking all measurement settings, we obtain the global probability vector
\begin{eqnarray}
\label{The noisy probability vector all}
\vp = \begin{bmatrix} \vp_1^\top & \cdots \vp_Q^\top  \end{bmatrix}^\top \in\R^{QK}.
\end{eqnarray}

\subsection{Empirical Probabilities under Finite-Shot Statistical Noise}
\label{sec:finite shot statistical noise}

In practical quantum experiments, the ideal probability distributions induced by a given measurement setting are not directly observable. Instead, each measurement setting is implemented repeatedly on independently prepared copies of the quantum system, and only finite-sample outcomes are available. Specifically, for the $q$-th POVM, the measurement is repeated $M$ times independently, yielding outcome counts $f_{q,k}$ for each outcome $k \in [K]$. The corresponding empirical probabilities are defined as
\begin{equation}
\wh p_{q,k} = \frac{f_{q,k}}{M}, \  k \in[K],
\label{eq:empirical-prob}
\end{equation}
where the count vector $(f_{q,1},\ldots,f_{q,K})$ follows a multinomial distribution $\operatorname{Multinomial}(M,\vp_q)$~\cite{severini2005elements}, with parameter given by the underlying probability vector $\vp_q$ induced by the noisy state and noisy POVM as defined in \eqref{The noisy probability vector}.

This construction yields a finite-shot observation model in which the empirical statistics fluctuate around the underlying probability distribution due to multinomial sampling noise. In particular, the randomness in the observations arises from both the quantum measurement process and the finite number of repetitions, leading to an additional layer of statistical noise beyond state preparation and measurement imperfections.

By stacking all measurement settings, we obtain the global empirical probability vector
\begin{equation}
\wh\vp = \begin{bmatrix}
    \wh \vp_1^\top & \cdots & \wh \vp_Q^\top
\end{bmatrix}^\top\in\R^{QK},
\label{eq:map-M-POVM1}
\end{equation}
which serves as the observed data for subsequent statistical estimation and sample complexity analysis.

\section{Structured Quantum States}
\label{sec:structured quantum states}

In this section, we discuss representative low-dimensional structures in quantum states and introduce structured models that enable scalable QST. Since a generic $n$-qubit quantum state is specified by exponentially many parameters, reconstructing an arbitrary state generally requires prohibitive computational and sample resources. Scalable QST therefore relies on exploiting additional low-dimensional structure that substantially reduces the effective number of degrees of freedom. Such structures naturally arise from locality of interactions, limited entanglement, state preparation procedures, and underlying symmetries. By restricting the reconstruction problem to these structured state families, one can often achieve accurate state estimation with significantly improved efficiency.

\paragraph*{Low-rank representation} Low-rank structure is one of the most widely exploited forms of complexity reduction in QST. Many physical quantum states are pure or nearly pure, and therefore possess density operators with low entropy that can be accurately represented by low-rank matrices~\cite{kueng2017low,guctua2020fast,francca2021fast,voroninski2013quantum,haah2017sample,ji2018pseudorandom,englbrecht2020symmetries}. Such structure substantially reduces the effective number of degrees of freedom compared to a generic density operator, thereby enabling more sample-efficient state reconstruction. Formally, we define the set of rank-$r^{\textup{LR}}$ quantum states as
\begin{eqnarray}
    \label{Definition of low rank set rho}
    \setX_{\textup{LR}}= \Big\{ \vrho\in\C^{d^n\times d^n}:  \ \vrho \succeq {\bm 0}, \trace(\vrho) = 1 , \rank(\vrho) = r^{\textup{LR}}  \Big\}.
\end{eqnarray}
Given an arbitrary estimator, one may enforce both the low-rank and physical constraints by projecting onto $\setX_{\textup{LR}}$. Such a projection can be implemented through a combination of eigenvalue thresholding and simplex projection~\cite{qin2025enhancing}, which guarantees that the resulting estimate remains a valid rank-$r^{\textup{LR}}$ density matrix. However, the associated eigendecomposition becomes computationally demanding for large-scale quantum systems. To circumvent this difficulty, it is desirable to adopt a parameterization that satisfies the low-rank and physical constraints by construction, thereby eliminating the need for explicit projections during optimization. Specifically, we employ the Burer-Monteiro factorization~\cite{burer2003nonlinear,burer2005local} and parameterize any state $\vrho\in\setX_{\textup{LR}}$ as $\vrho = \mF \mF^\dagger$ where $\mF\in\setF_{\textup{LR}}$ satisfies
\begin{eqnarray}
    \label{Definition of low rank set F}
    \setF_{\textup{LR}}= \Big\{ \mF\in\C^{d^n\times r^{\textup{LR}}}:  \ \|\mF\|_F = 1   \Big\}.
\end{eqnarray}
This factorized representation $\mF \mF^\dagger$ automatically enforces positive semidefiniteness and the rank constraint, while the normalization condition $\|\mF\|_F=1$ ensures that $\trace(\vrho)=\trace(\mF^\dagger\mF)=\|\mF\|_F^2=1$. Therefore, optimization over the constrained set $\setX_{\textup{LR}}$ can be equivalently reformulated as optimization over the lower-dimensional factor space $\setF_{\textup{LR}}$.

\paragraph*{Tensor network representations}
Low-rank structure provides an effective means of reducing the complexity of QST by restricting the state space to a low-dimensional manifold. Nevertheless, the resulting representation generally remains exponentially large in the system size. For instance, a pure $n$-qudit quantum state, corresponding to a rank-one density matrix, still requires $O(d^n)$ parameters for its description. As contemporary quantum processors continue to scale toward increasingly large system sizes, with state-of-the-art devices already exceeding one hundred qubits~\cite{preskill2018quantum,arute2019quantum,chow2021ibm}, such exponential scaling presents a fundamental obstacle to scalable quantum state reconstruction.

A more powerful form of structure arises from the observation that physically realizable quantum states typically occupy only a small and highly structured subset of the exponentially large Hilbert space. In particular, locality of interactions and restricted entanglement growth often imply that the underlying state admits a compact tensor-network representation~\cite{orus2014practical,orus2019tensor}. For one-dimensional quantum systems, such representations are particularly effective due to the limited growth of entanglement. For example, ground states of local gapped Hamiltonians and states generated by finite-time evolution under local interactions frequently obey entanglement area laws, enabling efficient tensor-network descriptions~\cite{eisert2010,noh2020efficient}. Beyond one-dimensional settings, tensor-network representations have also been successfully employed to characterize highly entangled two-dimensional quantum states, including cluster states~\cite{raussendorf2001one}, the Toric Code ground state~\cite{kitaev2003fault}, the 2D resonating valence bond model~\cite{anderson1987resonating}, and 2D AKLT model~\cite{affleck1988valence,wei2011affleck}.  Rather than representing a quantum state by a single exponentially large tensor, tensor networks decompose the global state into a collection of interconnected low-order tensors linked through virtual bonds. This decomposition can dramatically reduce the number of free parameters while preserving the essential physical properties of the state. Consequently, tensor-network representations have become a cornerstone of scalable quantum many-body modeling. We begin by introducing matrix product state (MPS) and matrix product operator (MPO), which provide compact representations of quantum states and density operators in one-dimensional systems. Building upon these foundations, we then present a novel low-rank matrix product operator (LR-MPO) representation that incorporates both low-rank structure and physical constraints within a unified tensor-network framework. Finally, we extend this construction to two-dimensional systems through projected entangled-pair state (PEPS), projected entangled-pair operator (PEPO), and the corresponding low-rank PEPO (LR-PEPO) representation.

\subparagraph*{Matrix product state} Among various tensor-network representations, the MPS is arguably the most widely used ansatz for one-dimensional quantum systems. MPS provides an efficient representation of pure quantum states by decomposing the exponentially large state vector into a sequence of low-order tensors connected through auxiliary bond indices. When the bond dimensions remain moderate, the number of free parameters grows only linearly with the system size, thereby offering an exponential reduction in storage complexity compared to a generic many-body wavefunction. Let $i_1\cdots i_\nqbit$  denote the row  index\footnote{Specifically, $i_1\cdots i_n$ represents the $(i_1+\sum_{\ell=2}^n d^{\ell-1}(i_\ell-1))$-th row.}, where $i_1,\ldots,i_\nqbit\in [d]$.
The feasible set of MPS factors is defined as
\begin{eqnarray}
    \label{the set of MPS states}
    \setF_{\textup{MPS}} &\!\!\!\!=\!\!\!\!& \Big\{ \vf\in\C^{d^n\times 1}:\  \|\vf\|_F=1, \vf(i_1 \cdots i_\nqbit) = \mX_1^{i_1}  \cdots \mX_\nqbit^{i_\nqbit}, \nonumber\\
    &\!\!\!\!\!\!\!\!& \mX_\ell^{i_\ell}\in\C^{r_{\ell-1}^{\textup{MPS}}\times r_\ell^{\textup{MPS}}}, \ell\in[n], r_0^{\textup{MPS}}=r_n^{\textup{MPS}}=1  \Big\},
\end{eqnarray}
where $r_\ell^{\textup{MPS}}$ denotes the bond dimension connecting neighboring tensors. The corresponding density operator is given by $\vrho=\vf\vf^\dagger$, which is automatically PSD, normalized, and rank one. Consequently, the MPS representation offers a compact parameterization of pure quantum states with substantially reduced storage complexity.

\subparagraph*{Matrix product operator} While MPS provides an efficient representation of pure quantum states, many applications in QST involve mixed states described by density operators. The MPO representation extends the MPS framework from state vectors to operators and has become one of the most widely used tensor-network ansatze for mixed quantum states. By introducing both row and column physical indices at each site, MPOs provide a compact representation of high-dimensional density operators while retaining the favorable scalability properties of tensor networks. Formally, the set of MPOs is defined as
\begin{eqnarray}
    \label{Definition of MPO set rho}
    \setX_{\textup{MPO}} &\!\!\!\!=\!\!\!\!& \Big\{ \vrho\in\C^{d^n\times d^n}:  \ \vrho \succeq {\bm 0}, \trace(\vrho) = 1,  \vrho(i_1 \cdots i_\nqbit, j_1\cdots j_\nqbit) = \mX_1^{i_1,j_1}  \cdots \mX_\nqbit^{i_\nqbit,j_\nqbit},\nonumber\\
    &\!\!\!\!\!\!\!\!& \mX_\ell^{i_\ell,j_\ell}\in\C^{r_{\ell-1}^{\textup{MPO}}\times r_\ell^{\textup{MPO}}}, \ell\in[n], r_0^{\textup{MPO}}=r_n^{\textup{MPO}}=1   \Big\}.
\end{eqnarray}
We emphasize that enforcing physical constraints within the MPO manifold remains a nontrivial challenge in QST. In particular, the tensor-network structure imposed by prescribed bond dimensions and the simplex constraints associated with valid density operators are generally difficult to satisfy simultaneously. Following~\cite{qin2025enhancing}, one possible strategy is to employ an approximate projection procedure consisting of a tensor-train singular value decomposition (TT-SVD)~\cite{oseledets2011tensor} followed by a simplex projection~\cite{chen2011projection}. The former restores the desired MPO structure, whereas the latter enforces positive semidefiniteness and unit trace. Despite its practical effectiveness, this sequential projection scheme does not preserve both properties exactly. In particular, the simplex projection is performed on the reconstructed density matrix and may perturb the tensor-network structure obtained from the TT-SVD step. Consequently, the resulting estimate generally no longer belongs to the original MPO manifold and may violate the prescribed bond-dimension constraints $[r_1^{\textup{MPO}},\cdots,r_{n-1}^{\textup{MPO}}]$. An alternative route is provided by matrix product density operators (MDPOs)~\cite{verstraete2004matrix}, which guarantee positive semidefiniteness by construction through local completely positive factorizations. However, the resulting parameterization introduces additional structural complexity and often leads to challenging optimization problems in large-scale settings. These limitations motivate the development of more compact representations that simultaneously preserve physical feasibility and computational tractability.

\subparagraph*{Low-rank matrix product operator}
Despite their expressive power, MPO representations do not readily enforce the physical constraints required of quantum states. As discussed above, maintaining the prescribed MPO structure while simultaneously guaranteeing positive semidefiniteness and unit trace remains challenging in practice. This motivates the search for alternative parameterizations that preserve the favorable scalability of tensor networks while enforcing physical feasibility by construction. A notable feature of many physical quantum states is the coexistence of low-rank structure and tensor-network locality. On the one hand, low-temperature thermal states and ground states of local Hamiltonians are often well approximated by low-rank density operators. On the other hand, the same states typically admit efficient tensor-network representations due to locality and restricted entanglement. These complementary properties suggest combining low-rank factorization with tensor-network parameterization within a unified framework.

Motivated by this observation, we introduce the LR-MPO representation. Specifically, we parameterize the density operator as $\vrho=\mF\mF^\dagger $ where the factor matrix $\mF\in\setF_{\textup{LR-MPO}}$ . Formally, we define the feasible set
\begin{eqnarray}
    \label{the set of LR-MPO states}
    \setF_{\textup{LR-MPO}} &\!\!\!\!=\!\!\!\!& \Big\{ \mF\in\C^{d^n\times r^{\textup{LR}}}:\ \|\mF\|_F=1, \mF(i_1 \cdots i_\nqbit,j) = \mX_1^{i_1} \cdots \mX_{B}^{i_{B},j} \cdots \mX_\nqbit^{i_\nqbit}, j\in[r^{\textup{LR}}], \nonumber\\
    &\!\!\!\!\!\!\!\!&  \mX_\ell^{i_\ell}\in\C^{r_{\ell-1}^{\textup{LR-MPO}}\times r_\ell^{\textup{LR-MPO}}}, \ell\in[n]\setminus\{B\} \  \textup{and} \  \mX_B^{i_B,j}\in\C^{r_{B-1}^{\textup{LR-MPO}}\times r_B^{\textup{LR-MPO}}}, \ r_0^{\textup{LR-MPO}}=r_n^{\textup{LR-MPO}}=1 \Big\}.
\end{eqnarray}
Relative to a standard MPS, the only modification is the introduction of the rank index $j$ at a designated site $B$. The location of $B$ is arbitrary and does not alter the resulting model class. For larger values of $r^{\textup{LR}}$, the index $j$ may be further decomposed into multiple auxiliary indices, leading to equivalent block-structured tensor-network representations; however, a single-index formulation is adopted here for notational simplicity.

The LR-MPO representation inherits several desirable properties from both constituent structures. First, the factorization $\vrho=\mF\mF^\dagger$ guarantees positive semidefiniteness by construction, while the normalization constraint $\|\mF\|_F=1$ ensures unit trace. Second, the MPO structure imposed on $\mF$ preserves the locality-induced compression characteristic of tensor networks. Consequently, LR-MPO simultaneously exploits low-rank spectral structure and tensor-network locality, providing a compact and physically consistent representation of quantum states. In addition, we note that the distinction between LR-MPO and conventional MPDO lies in the underlying parameterization. MPDO represents density operators through local completely positive tensor factorizations, whereas LR-MPO is built upon a global low-rank decomposition. In particular, each column of $\mF$ admits an MPS representation, so that $\vrho=\mF\mF^\dagger$ may be viewed as a collection of correlated MPS components coupled through the low-rank factorization. This perspective establishes a direct connection between low-rank matrix recovery and tensor-network representations, while retaining the computational advantages of both frameworks.

\subparagraph*{Projected entangled-pair state} Many quantum systems of practical interest are defined on two-dimensional lattices, where the entanglement structure is generally more intricate than that encountered in one-dimensional settings. In such cases, an MPS representation with moderate bond dimension may no longer provide an efficient description of the underlying quantum state. A natural extension of MPS to higher-dimensional systems is the PEPS representation, which has become a standard framework for describing two-dimensional quantum many-body states. By arranging local tensors according to the geometry of the underlying lattice and connecting neighboring tensors through virtual bonds, PEPS can efficiently represent a broad class of quantum states exhibiting area-law entanglement. Consider an $n$-qudit quantum system arranged on a two-dimensional $q\times p$ lattice, where $n=qp$. Let $\vf\in\C^{d^n\times 1}$ denote a pure state and $\vrho=\vf\vf^\dagger$ be the corresponding density operator. The PEPS representation parameterizes $\vf$ through a network of local tensors defined on the underlying lattice. Formally, the feasible set of PEPS factors is given by
\begin{eqnarray}
    \label{the set of PEPS states}
    \setF_{\textup{PEPS}} &\!\!\!\!=\!\!\!\!& \Big\{ \vf\in\C^{d^n\times 1}:\  \|\vf\|_F=1, \vf(i_{\ul{1}\,\ul{1}}\cdots i_{\ul{q}\,\ul{p}}) = \hspace{-1cm} \sum_{\substack{
      s_{\ul{a}\,\ul{b-1},\,\ul{a}\,\ul{b}}\;
      s_{\ul{a-1}\,\ul{b},\,\ul{a}\,\ul{b}} \\
      s_{\ul{a}\,\ul{b},\,\ul{a}\,\ul{b+1}}\;
      s_{\ul{a}\,\ul{b},\,\ul{a+1}\,\ul{b}} \\
      a\in[q],\, b\in[p]}}  \hspace{-1cm}
   \;\Pi_{a=1}^{q} \Pi_{b=1}^{p}
   \mX_{\ul{a}\,\ul{b}}^{\,i_{\ul{a}\,\ul{b}} }
   (s_{\ul{a}\,\ul{b-1},\,\ul{a}\,\ul{b}},\,
    s_{\ul{a-1}\,\ul{b},\,\ul{a}\,\ul{b}},\, s_{\ul{a}\,\ul{b},\,\ul{a}\,\ul{b+1}},\, \nonumber\\
    &\!\!\!\!\!\!\!\!& s_{\ul{a}\,\ul{b},\,\ul{a+1}\,\ul{b}}),
    \mX_{\ul{a}\,\ul{b}}^{\,i_{\ul{a}\,\ul{b}} }\in\C^{r_{\ul{a}\,\ul{b-1},\,\ul{a}\,\ul{b}}^{\textup{PEPS}}\times r_{\ul{a-1}\,\ul{b},\,\ul{a}\,\ul{b}}^{\textup{PEPS}}
\times r_{\ul{a}\,\ul{b},\,\ul{a}\,\ul{b+1}}^{\textup{PEPS}}\times r_{\ul{a}\,\ul{b},\,\ul{a+1}\,\ul{b}}^{\textup{PEPS}}}, a\in[q], b\in[p], r_{\ul{a}\,\ul{0},\,\ul{a}\,\ul{1}}^{\textup{PEPS}} =
 r_{\ul{0}\,\ul{b},\,\ul{1}\,\ul{b}}^{\textup{PEPS}} = \nonumber\\
&\!\!\!\!\!\!\!\!&  r_{\ul{a}\,\ul{p},\,\ul{a}\,\ul{p+1}}^{\textup{PEPS}} =
 r_{\ul{q}\,\ul{b},\,\ul{q+1}\,\ul{b}}^{\textup{PEPS}} = 1  \Big\}.
\end{eqnarray}

\subparagraph*{Projected entangled-pair operator} Similar to the relationship between MPS and MPO in one-dimensional systems, the PEPS representation is primarily designed for pure quantum states and therefore cannot directly capture general mixed states. In many applications of QST, however, the target state is described by a density operator rather than a wavefunction, necessitating a tensor-network representation in operator space. A natural extension of PEPS is provided by the PEPO representation, which generalizes tensor-network parameterizations from state vectors to density operators. By introducing both row and column physical indices at each lattice site, PEPO enables a compact representation of mixed quantum states while preserving the locality structure inherent in the underlying two-dimensional lattice. Consider an $n$-qudit quantum system arranged on a $q\times p$ lattice. Analogous to the MPO representation in one-dimensional systems, PEPO expresses the density operator as a contraction of local tensors associated with the lattice sites. The corresponding feasible set is defined as
\begin{eqnarray}
    \label{Definition of PEPO set rho}
    \setX_{\textup{PEPO}} &\!\!\!\!=\!\!\!\!& \Big\{ \vrho\in\C^{d^n\times d^n}:  \ \vrho \succeq {\bm 0}, \trace(\vrho) = 1,  \vrho(i_{\ul{1}\,\ul{1}}\cdots i_{\ul{q}\,\ul{p}}, \; j_{\ul{1}\,\ul{1}}\cdots j_{\ul{q}\,\ul{p}}) = \hspace{-1cm} \sum_{\substack{
      s_{\ul{a}\,\ul{b-1},\,\ul{a}\,\ul{b}}\;
      s_{\ul{a-1}\,\ul{b},\,\ul{a}\,\ul{b}} \\
      s_{\ul{a}\,\ul{b},\,\ul{a}\,\ul{b+1}}\;
      s_{\ul{a}\,\ul{b},\,\ul{a+1}\,\ul{b}} \\
      a\in[q],\, b\in[p]}} \hspace{-1cm}
   \;\Pi_{a=1}^{q} \Pi_{b=1}^{p}
   \mX_{\ul{a}\,\ul{b}}^{\,i_{\ul{a}\,\ul{b}}, j_{\ul{a}\,\ul{b}}}
   (s_{\ul{a}\,\ul{b-1},\,\ul{a}\,\ul{b}},\, \nonumber\\
   &\!\!\!\!\!\!\!\!& s_{\ul{a-1}\,\ul{b},\,\ul{a}\,\ul{b}},\, s_{\ul{a}\,\ul{b},\,\ul{a}\,\ul{b+1}},\,
    s_{\ul{a}\,\ul{b},\,\ul{a+1}\,\ul{b}}), \mX_{\ul{a}\,\ul{b}}^{\,i_{\ul{a}\,\ul{b}},\,j_{\ul{a}\,\ul{b}}}\in\C^{r_{\ul{a}\,\ul{b-1},\,\ul{a}\,\ul{b}}^{\textup{PEPO}}\times r_{\ul{a-1}\,\ul{b},\,\ul{a}\,\ul{b}}^{\textup{PEPO}}
\times r_{\ul{a}\,\ul{b},\,\ul{a}\,\ul{b+1}}^{\textup{PEPO}}\times r_{\ul{a}\,\ul{b},\,\ul{a+1}\,\ul{b}}^{\textup{PEPO}}}, a\in[q], b\in[p],\nonumber\\
    &\!\!\!\!\!\!\!\!&  r_{\ul{a}\,\ul{0},\,\ul{a}\,\ul{1}}^{\textup{PEPO}} =
 r_{\ul{0}\,\ul{b},\,\ul{1}\,\ul{b}}^{\textup{PEPO}} =
 r_{\ul{a}\,\ul{p},\,\ul{a}\,\ul{p+1}}^{\textup{PEPO}} =
 r_{\ul{q}\,\ul{b},\,\ul{q+1}\,\ul{b}}^{\textup{PEPO}} = 1   \Big\}.
\end{eqnarray}
Unlike MPO representations, which admit practical approximate projection procedures for enforcing both tensor-network structure and physical constraints, no analogous approach is currently available for PEPOs. Consequently, obtaining physically feasible PEPO representations remains considerably more challenging. This motivates the development of alternative tensor-network parameterizations that preserve the expressive power of PEPOs while enforcing physical feasibility by construction.

\subparagraph*{Low-rank projected entangled-pair operator} Following the same principle underlying the LR-MPO construction, we introduce a LR-PEPO representation for two-dimensional quantum systems. Specifically, we parameterize the density operator as $\vrho=\mF\mF^\dagger$ where the factor matrix $\mF\in\setF_{\textup{LR-PEPO}}$ . Formally, we define the feasible set
\begin{eqnarray}
    \label{the set of LR-PEPO states}
    \setF_{\textup{LR-PEPO}} &\!\!\!\!=\!\!\!\!& \Big\{ \mF\in\C^{d^n\times r^{\textup{LR}}}:\ \|\mF\|_F=1, \mF(i_{\ul{1}\,\ul{1}}\cdots i_{\ul{q}\,\ul{p}},j) = \hspace{-1cm} \sum_{\substack{
      s_{\ul{a}\,\ul{b-1},\,\ul{a}\,\ul{b}}\;
      s_{\ul{a-1}\,\ul{b},\,\ul{a}\,\ul{b}} \\
      s_{\ul{a}\,\ul{b},\,\ul{a}\,\ul{b+1}}\;
      s_{\ul{a}\,\ul{b},\,\ul{a+1}\,\ul{b}} \\
      a\in[q],\, b\in[p], \ul{a}\,\ul{b} \neq \ul{B_1}\,\ul{B_2}}}\hspace{-1cm}
   \;\Pi_{a=1}^{q} \Pi_{b=1}^{p}
   \mX_{\ul{a}\,\ul{b}}^{\,i_{\ul{a}\,\ul{b}} }
   (s_{\ul{a}\,\ul{b-1},\,\ul{a}\,\ul{b}},\,
    s_{\ul{a-1}\,\ul{b},\,\ul{a}\,\ul{b}},\, s_{\ul{a}\,\ul{b},\,\ul{a}\,\ul{b+1}},\,  \nonumber\\
    &\!\!\!\!\!\!\!\!& s_{\ul{a}\,\ul{b},\,\ul{a+1}\,\ul{b}})\mX_{\ul{B_1}\,\ul{B_2}}^{\,i_{\ul{B_1}\,\ul{B_2}}, j }
   (s_{\ul{B_1}\,\ul{B_2-1},\,\ul{B_1}\,\ul{B_2}},\,
    s_{\ul{B_1-1}\,\ul{B_2},\,\ul{B_1}\,\ul{B_2}},\, s_{\ul{B_1}\,\ul{B_2},\,\ul{B_1}\,\ul{B_2+1}},\,  s_{\ul{B_1}\,\ul{B_2},\,\ul{B_1+1}\,\ul{B_2}}), j\in[r^{\textup{LR}}], \nonumber\\
    &\!\!\!\!\!\!\!\!&  \mX_{\ul{a}\,\ul{b}}^{\,i_{\ul{a}\,\ul{b}} }\in\C^{r_{\ul{a}\,\ul{b-1},\,\ul{a}\,\ul{b}}^{\textup{LR-PEPO}}\times r_{\ul{a-1}\,\ul{b},\,\ul{a}\,\ul{b}}^{\textup{LR-PEPO}}
\times r_{\ul{a}\,\ul{b},\,\ul{a}\,\ul{b+1}}^{\textup{LR-PEPO}}\times r_{\ul{a}\,\ul{b},\,\ul{a+1}\,\ul{b}}^{\textup{LR-PEPO}}}, a\in[q]\setminus \{ B_1\}, b\in[p]\setminus \{ B_2\},\nonumber\\
&\!\!\!\!\!\!\!\!&\mX_{\ul{B_1}\,\ul{B_2}}^{\,i_{\ul{B_1}\,\ul{B_2}}, j }\in\C^{r_{\ul{B_1}\,\ul{B_2-1},\,\ul{B_1}\,\ul{B_2}}^{\textup{LR-PEPO}}\times r_{\ul{B_1-1}\,\ul{B_2},\,\ul{B_1}\,\ul{B_2}}^{\textup{LR-PEPO}}
\times r_{\ul{B_1}\,\ul{B_2},\,\ul{B_1}\,\ul{B_2+1}}^{\textup{LR-PEPO}}\times r_{\ul{B_1}\,\ul{B_2},\,\ul{B_1+1}\,\ul{B_2}}^{\textup{LR-PEPO}}},\nonumber\\
    &\!\!\!\!\!\!\!\!&
 r_{\ul{a}\,\ul{0},\,\ul{a}\,\ul{1}}^{\textup{LR-PEPO}} =
 r_{\ul{0}\,\ul{b},\,\ul{1}\,\ul{b}}^{\textup{LR-PEPO}} =
 r_{\ul{a}\,\ul{p},\,\ul{a}\,\ul{p+1}}^{\textup{LR-PEPO}} =
 r_{\ul{q}\,\ul{b},\,\ul{q+1}\,\ul{b}}^{\textup{LR-PEPO}} = 1   \Big\}.
\end{eqnarray}
In this construction, the auxiliary index $j$ is incorporated into the local tensor at site $(B_1,B_2)$, while all other tensors are shared across different columns of $\mF$. The low-rank structure is introduced through a single auxiliary index embedded in one local tensor, which induces a global coupling across the entire tensor-network contraction. As a result, the proposed LR-PEPO representation simultaneously preserves the two-dimensional tensor-network structure of PEPOs and guarantees physical feasibility.

\paragraph*{Sparse representations} Sparse structures arise when a quantum state admits a compact description in a suitable basis, such that only a small number of coefficients are significantly nonzero. Although sparsity has received comparatively less attention than low-rank or tensor-network structures in QST, it naturally appears in several physical quantum states. Representative examples include generalized GHZ states~\cite{jameson2024optimal}, and Dicke states with a small number of excitations~\cite{bartschi2019deterministic}, all of which have support on only a small subset of computational-basis vectors. More broadly, sparsity provides an alternative notion of complexity that can be leveraged when the target state is concentrated on a small subset of basis elements or basis functions. Formally, we define the set of $s$-sparse quantum states as
\begin{eqnarray}
    \label{Definition of sparse set rho}
    \setX_{\textup{S}}= \Big\{ \vrho\in\C^{d^n\times d^n}:  \ \|\vrho\|_{0} = s, \vrho \succeq {\bm 0}, \trace(\vrho) = 1   \Big\},
\end{eqnarray}
where $\|\vrho\|_{0}$ denotes the number of nonzero entries of $\vrho$. However, directly enforcing sparsity on the density matrix while simultaneously satisfying the physical constraints of positive semidefiniteness and unit trace can be challenging. Moreover, as illustrated by the examples above, many physical sparse quantum states also exhibit low-rank structure. Motivated by these considerations, one may instead consider a Burer-Monteiro factorization $\vrho=\mF\mF^\dagger$ and impose sparsity on the factor matrix $\mF$. Specifically, we define the corresponding low-rank and sparse (LR-S) factor set as
\begin{eqnarray}
    \label{Definition of sparse set F}
    \setF_{\textup{LR-S}}= \Big\{ \mF\in\C^{d^n\times r^{\textup{LR}}}:  \ \|\mF\|_{0} = t, \|\mF\|_F = 1   \Big\}.
\end{eqnarray}
Under this parameterization, positive semidefiniteness is guaranteed automatically by construction, while sparsity and low-rankness are simultaneously encoded through the factor matrix $\mF$. Such a formulation naturally captures structured quantum states that admit sparse and low-rank descriptions, including GHZ, W, and Dicke states. Furthermore, if $\mF\in\setF_{\textup{LR-S}}$, then the corresponding density matrix satisfies
\begin{eqnarray}
    \label{Upper bound of density matrix from the factor}
    \|\vrho \|_0\leq t^2,
\end{eqnarray}
which shows that sparsity in the factor matrix induces sparsity in the resulting density matrix.

\section{Theoretical Guarantees for Structured Quantum State Tomography}
\label{sec:structured quantum state tomography}

\subsection{Characterization of Measurement Quality}
\label{subsec:Measurement Quality Characterization}

Throughout this section, we use $\setX$ to denote the admissible family of structured quantum states under consideration, which can be specified either directly through constraints on the density matrix $\vrho$ or indirectly through a factorized parameterization $\vrho=\mF \mF^\dagger$, where $\mF \in\setF$. For a given state family $\setX$, the reconstruction accuracy of structured QST depends critically on the information provided by the underlying measurement ensemble. Rather than restricting our analysis to a particular class of POVMs, we introduce a set of abstract quantities that characterize the quality of the measurements over the admissible state family $\setX$. These quantities capture both the distinguishability between different quantum states and the higher-order interactions arising in the statistical analysis of the proposed estimators. As a result, the subsequent sample complexity bounds can be expressed in a unified form and applied to a broad range of measurement schemes, including deterministic informationally complete POVMs \cite{matthews2009distinguishability,kueng2017low} as well as random measurement ensembles such as Haar random projective measurements \cite{haah2017sample,qin2024quantum,qin2025enhancing} and unitary $t$-designs \cite{gross2007evenly,dankert2009exact,roy2009unitary}.

To quantify the measurement quality, we define the following complexity parameters.
\begin{eqnarray}
    \label{lower constant of second order information}
    &\!\!\!\!\!\!\!\!&\alpha_1(Q,K) := \inf_{\vrho, \vrho^\star \in \setX}\frac{\sum_{q=1}^{Q}\sum_{k=1}^{K}\<\mA_{q,k}, \vrho - \vrho^\star\>^2}{\|\vrho - \vrho^\star\|_F^2},\\
    \label{upper constant of second order information}
    &\!\!\!\!\!\!\!\!&\alpha_2(Q,K) := \sup_{\vrho, \vrho^\star \in \setX}\frac{\sum_{q=1}^{Q}\sum_{k=1}^{K}\<\mA_{q,k}, \vrho - \vrho^\star\>^2}{\|\vrho - \vrho^\star\|_F^2},\\
    \label{upper constant of third order information}
    &\!\!\!\!\!\!\!\!&\beta(Q,K) := \sup_{\vrho, \vrho^\star \in \setX}\frac{\sum_{q=1}^{Q}\sum_{k=1}^{K}\<\mA_{q,k}, \vrho - \vrho^\star\>^2\<\mA_{q,k}, \vrho^\star  \>}{\|\vrho - \vrho^\star\|_F^2},
\end{eqnarray}
where $\setX\subset\setX_{\textup{PHY}}$ denotes the admissible set of structured quantum states.  These parameters characterize different aspects of the measurement ensemble with respect to the admissible state family $\setX$.  $(i)$ The lower complexity parameter, $\alpha_1(Q,K)$, measures the minimum distinguishability between different states after projection onto the measurement operators. Specifically, it provides a uniform lower bound on the measurement-induced quadratic form over all admissible perturbations $\vrho-\vrho^\star$. A larger value of $\alpha_1(Q,K)$ indicates that distinct states produce more separated measurement statistics, thereby improving identifiability. $(ii)$ The upper complexity parameter $\alpha_2(Q,K)$ characterizes the largest measurement-induced quadratic form over the admissible perturbation set. Together, $\alpha_1(Q,K)$ and $\alpha_2(Q,K)$ quantify the conditioning of the measurement operator restricted to $\setX$, describing how uniformly the measurement ensemble preserves the geometry of admissible state differences. In particular, when $\alpha_1(Q,K)$ and $\alpha_2(Q,K)$ are of the same order, the measurement ensemble behaves nearly isotropically over $\setX$. $(iii)$ The parameter, $\beta(Q,K)$, captures a weighted second-order moment of the measurement ensemble. Unlike $\alpha_1(Q,K)$ and $\alpha_2(Q,K)$, which depends only on the measurement geometry, $\beta(Q,K)$ also incorporates the outcome probabilities $\<\mA_{q,k},\vrho^\star\>$ induced by the underlying state. Consequently, $\beta(Q,K)$ quantifies the average measurement energy of admissible perturbations under the induced measurement distribution.

Although the three parameters characterize different aspects of the measurement ensemble, they enter the recovery analysis only through the normalized complexity ratio
\begin{eqnarray}
    \label{ratio of SNR}
    \frac{(1-a)\beta(Q,K) + a\alpha_2(Q,K)/K}{(\alpha_1(Q,K))^2},
\end{eqnarray}
which serves as the fundamental complexity parameter governing the statistical efficiency of the measurement ensemble. Specifically, the numerator captures the magnitude of statistical fluctuations arising from both the probability-weighted second-order moment and the worst-case measurement energy, whereas the denominator quantifies the intrinsic distinguishability of admissible states. Consequently, a smaller ratio corresponds to a more informative and better-conditioned measurement ensemble, leading to lower sample complexity and more accurate state reconstruction. As shown below, this quantity directly determines the recovery guarantees established in this paper.

The complexity parameters $\alpha_1(Q,K)$, $\alpha_2(Q,K)$, and $\beta(Q,K)$ provide an abstract characterization of the quality of a general POVM ensemble. To illustrate their behavior in concrete settings, we next evaluate these quantities for several measurement schemes commonly used in QST, including spherical $3$-designs and $\delta$-approximate spherical $3$-designs (corresponding to the case $Q=1$), unitary $3$-designs, and Haar random projective measurements. According to \cite[Eqs.~(7) and (47)]{qin2024sample} and \cite[Eqs.~(S35) and (S36)]{huang2020predicting}, for $\setX\subset\setX_{\textup{PHY}}$, we obtain
\begin{eqnarray}
\label{parameters of different POVMs}
\frac{(1-a)\beta(Q,K) + a\alpha_2(Q,K)/K}{(\alpha_1(Q,K))^2} = \begin{dcases}
    \Theta(1), &  \text{Spherical $3$-designs}, \\
    \Theta\bigg(\frac{1+\delta}{(1-\delta)^2}\bigg), &  \text{$\delta$-approximate spherical $3$-designs}, \\
    \Theta\bigg(\frac{1}{Q}\bigg), &  \text{Haar/Unitary $3$-designs}.
\end{dcases}
\end{eqnarray}
For spherical $3$-designs, the scaling in \eqref{parameters of different POVMs} follows directly from deterministic evaluations of the quantities defined in \eqref{lower constant of second order information} and \eqref{upper constant of third order information}. The same derivation extends to $\delta$-approximate spherical $3$-designs, where the approximation error is reflected in the additional factor $(1+\delta)/(1-\delta)^2$. For unitary $3$-designs and Haar random projective measurements, the corresponding scaling is obtained by evaluating the measurement moments appearing in \eqref{lower constant of second order information}, \eqref{upper constant of second order information} and \eqref{upper constant of third order information} with respect to the underlying measurement distribution, i.e., by taking expectations over the random measurement operators. Notably, compared with (approximate) spherical $3$-designs, unitary $3$-designs and Haar random projective measurements exhibit an additional factor of $1/Q$ improvement in the effective complexity parameter. Since the ratio directly governs the sample complexity bounds derived below, this scaling implies that the statistical complexity of the QST decreases with the number of measurement settings. Consequently, randomized measurement ensembles require proportionally fewer state copies to achieve the same reconstruction accuracy as $Q$ increases.

\subsection{Sample Complexity}
\label{subsec:sample omplexity}

A central difficulty in analyzing structured QST lies in establishing uniform control of the empirical loss over the admissible state class $\setX$. To this end, $\epsilon$-net and covering number theory provide a powerful framework for quantifying the complexity of structured state classes and deriving finite-sample guarantees. However, constructing an $\epsilon$-cover for $\setX$ is challenging because $\setX$ is characterized by nonconvex structural constraints, such as sparsity, low-rankness and tensor network representations. To address this issue, we exploit the factorized structure of the hypothesis class. Specifically, every admissible state $\vrho \in \setX$ is characterized by $L$ local components, where the $i$-th component belongs to a constraint set $\setX_i$, $i\in[L]$,  together with an additional global constraint set $\setX_{L+1}$. Here $L$ is model dependent and represents the number of constituent components in the chosen parameterization. For instance, $L=1$ for sparse or low-rank factorizations, whereas $L=n$ for tensor network representations. We define the factorized class $\widetilde{\setX} = \{\vrho : \text{the local components of } \vrho \text{ belong to } \setX_1 \times \cdots \times \setX_L\}$, so that $\setX = \widetilde{\setX} \cap \setX_{L+1}$. Let $N_\epsilon(\setX)$ denote the covering number of $\setX$ under the Frobenius norm, i.e., the minimum number of Frobenius-norm balls of radius $\epsilon$ required to cover $\setX$. By the monotonicity of covering numbers under set inclusion with the Cartesian-product structure of the parameter space, we obtain the following bound on the covering number: $N_{\epsilon}(\setX) = N_{\epsilon}(\widetilde{\setX} \cap \setX_{L+1}) \leq \min\{N_{\epsilon}(\widetilde{\setX}),  N_{\epsilon}(\setX_{L+1}) \}\leq \min\{\Pi_{i=1}^{L} N_{\epsilon}(\setX_i),  N_{\epsilon}(\setX_{L+1}) \}$. We summarize the resulting covering number bounds for the considered state classes in the following lemma.
\begin{lemma}[Covering numbers of structured quantum-state classes]
\label{conclusion of all covering numbers}
Let $N_{\textup{type}}$ denote the covering number of the parameter set associated with a structured class of quantum states under the Frobenius norm, where
$\textup{type}$ indexes the class under consideration; the precise definitions of these parameter sets are given in Appendix~\ref{Proof of ocvering number all}. The logarithms of the covering numbers for the structured quantum-state classes considered in this paper are summarized in Table~\ref{tab:covering-numbers}.
\begin{table*}[!ht]
\centering
\footnotesize
\caption{Log-covering numbers $\log N_{\textup{type}}$ for structured quantum-state classes.}
\label{tab:covering-numbers}
\renewcommand{\arraystretch}{1.5}
\begin{tabular}{cl}
\hline
\textbf{Class} & \textbf{$\log N_{\textup{type}}$} \\
\hline
$\setX_{\textup{PHY}}$ &
$O(d^{2n})$ \\
$\setX_{\textup{S}}$ &
$O\big(s\log\tfrac{d^{2n}}{s}\big)$ \\
$\setX_{\textup{LR}}/\setF_{\textup{LR}}$ &
$O(d^n r^{\textup{LR}})$ \\
$\setF_{\textup{LR-S}}$ &
$O\big(t^2\log\tfrac{d^{2n}}{t^2}\big)$ \\
$\setF_{\textup{MPS}}$ &
$O\big(\sum_{\ell=1}^{n} d\,r_{\ell-1}^{\textup{MPS}} r_\ell^{\textup{MPS}} \log n\big)$ \\
$\setX_{\textup{MPO}}$ &
$O\big(\sum_{\ell=1}^{n} d^2\,r_{\ell-1}^{\textup{MPO}} r_\ell^{\textup{MPO}} \log n\big)$ \\
$\setF_{\textup{LR-MPO}}$ &
$O\big(\sum_{\ell=1}^{n} d^2\,(r_{\ell-1}^{\textup{LR\text{-}MPO}})^2 (r_\ell^{\textup{LR\text{-}MPO}})^2 \log n\big)$ \\
$\setF_{\textup{PEPS}}$ &
$O\big(\sum_{a=1}^{q}\sum_{b=1}^{p} d\,
r^{\textup{PEPS}}_{a,b-1} r^{\textup{PEPS}}_{a-1,b}
r^{\textup{PEPS}}_{a,b+1} r^{\textup{PEPS}}_{a+1,b} \log n\big)$ \\
$\setX_{\textup{PEPO}}$ &
$O\big(\sum_{a=1}^{q}\sum_{b=1}^{p} d^2\,
r^{\textup{PEPO}}_{a,b-1} r^{\textup{PEPO}}_{a-1,b}
r^{\textup{PEPO}}_{a,b+1} r^{\textup{PEPO}}_{a+1,b} \log n\big)$ \\
$\setF_{\textup{LR-PEPO}}$ &
$O\big(\sum_{a=1}^{q}\sum_{b=1}^{p} d^2\,
(r^{\textup{LR\text{-}PEPO}}_{a,b-1})^2
(r^{\textup{LR\text{-}PEPO}}_{a-1,b})^2
(r^{\textup{LR\text{-}PEPO}}_{a,b+1})^2
(r^{\textup{LR\text{-}PEPO}}_{a+1,b})^2 \log n\big)$ \\
\hline
\end{tabular}
\end{table*}
\end{lemma}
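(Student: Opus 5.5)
The plan is to treat every entry of Table~\ref{tab:covering-numbers} as an instance of the same volumetric argument, combined with the Cartesian-product bound $N_{\epsilon}(\setX)\le\min\{\Pi_{i=1}^{L}N_{\epsilon}(\setX_i),N_{\epsilon}(\setX_{L+1})\}$ stated just before the lemma. The constants in the table are suppressed, so we work at a fixed accuracy $\epsilon$, or one polynomial in $1/n$, which is what produces the $\log n$ factors.

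The basic tool is the standard fact that the unit Frobenius ball in a real space of dimension $D$ admits an $\epsilon$-net of size at most $(3/\epsilon)^D$. We view $\C^{m}$ as $\R^{2m}$, which changes constants only.

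\emph{Unstructured, low-rank and sparse classes.} For $\setX_{\textup{PHY}}$, every density matrix has $\|\vrho\|_F\le 1$ and lies in a real space of dimension $d^{2n}$ (Hermitian matrices). This gives $\log N\lesssim d^{2n}\log(3/\epsilon)$. For $\setF_{\textup{LR}}$, the factor $\mF$ lies on the unit sphere of $\C^{d^n\times r^{\textup{LR}}}$, giving $O(d^nr^{\textup{LR}})$. To transfer the net to $\vrho$, we use
\[
\|\mF\mF^\dagger-\mF'\mF'^\dagger\|_F\le(\|\mF\|+\|\mF'\|)\|\mF-\mF'\|_F\le 2\|\mF-\mF'\|_F .
\]
So a net on $\mF$ induces a net on $\vrho$ with radius doubled.

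For $\setX_{\textup{S}}$ and $\setF_{\textup{LR-S}}$, we take a union over supports. There are $\binom{D}{s}\le(eD/s)^s$ choices of support, and on each fixed support the unit ball in dimension $s$ gets a net of size $(3/\epsilon)^{s}$. Together these give $O(s\log(d^{2n}/s))$. For the LR-S entry, we apply the same argument to $\vrho=\mF\mF^\dagger$, which is $t^2$-sparse by \eqref{Upper bound of density matrix from the factor}. This yields the stated $t^2\log(d^{2n}/t^2)$; covering $\mF$ directly would give the sharper $t\log(d^nr^{\textup{LR}}/t)$, and either bound suffices.

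\emph{Tensor-network classes.} For MPS and MPO, we use the left-orthogonal (canonical) gauge. Every normalized MPS can be written with $\mX_1,\dots,\mX_{n-1}$ left-orthonormal, so each reshaped core has spectral norm at most one, and the last core has Frobenius norm one. Each core then lies in a bounded set of real dimension $\lesssim d r_{\ell-1}r_\ell$ (or $d^2r_{\ell-1}r_\ell$ for MPO). We cover each core at radius $\epsilon/n$ and use the telescoping perturbation bound
\[
\|\vf-\vf'\|_F\le\sum_{\ell}\|\mX_\ell-\mX_\ell'\|_F ,
\]
which is valid because all other cores are norm-bounded. Taking the product over cores gives $\sum_\ell d r_{\ell-1}r_\ell\log(3n/\epsilon)$, which is the $\log n$ form in the table. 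For MPO without a canonical gauge, we instead bound the cores via the normalization $\|\vrho\|_F\le1$ after a TT-SVD gauge of $\vrho$, which again gives orthonormal cores.

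LR-MPO is handled the same way: it is an MPS with an extra index of size $r^{\textup{LR}}$ at site $B$, followed by the Lipschitz step above. The squared ranks in the table arise if one instead covers $\vrho=\mF\mF^\dagger$ as an MPO of bond dimension $r_\ell^2$, which is an upper bound and matches the table. PEPS, PEPO and LR-PEPO follow by the same product-of-cores argument with the four bond dimensions per site.

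\emph{Main obstacle.} The main difficulty is the 2D case. PEPS admit no canonical form, so the individual cores cannot be assumed norm-bounded, and the telescoping bound would pick up a product of core norms. The first fix I would try is to balance gauges so that every core has Frobenius norm at most $C^{1/n}$, which is possible up to scaling along bonds. Alternatively, one can assume bounded core norms as part of the parameter set definition in Appendix~\ref{Proof of ocvering number all}, so that cores are covered at radius $\epsilon/(nC)$. Either way the logarithmic factor remains $O(\log n)$ for fixed $\epsilon$, and the stated bound follows.
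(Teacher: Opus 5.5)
Your argument is correct in substance and uses the same machinery as the paper: volumetric nets, a union over supports for the sparse classes, the left-orthogonal TT gauge with per-core nets at scale $\sim 1/n$ (which is where the $\log n$ comes from), and norm bounds on the local tensors for the 2D classes.

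Where you genuinely differ is in \emph{which} set gets covered.
\begin{itemize}
\item You cover the state classes themselves. For factorized classes you push a net on $\mF$ forward through $\|\mF\mF^\dagger-\mF'\mF'^\dagger\|_F\le 2\|\mF-\mF'\|_F$.
\item For most classes, the paper's appendix sets are instead difference sets with doubled structure, trace-zero and of unit Frobenius norm: rank $2r^{\textup{LR}}$, at most $2t^2$ nonzeros, MPO bond dimensions $2r_\ell$, and $2(r_\ell)^2$ for LR-MPO/LR-PEPO. The reason is that the proof of Theorem~\ref{label:sample complexity of unbiased solution} takes a supremum over the normalized difference $(\wh\vrho-\vrho^\star)/\|\wh\vrho-\vrho^\star\|_F$.
\end{itemize}
A Lipschitz push-forward from factors cannot produce a net for normalized differences, because normalization blows up when $\|\mF_1\mF_1^\dagger-\mF_2\mF_2^\dagger\|_F$ is small. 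This is why the paper embeds $\mF_1\mF_1^\dagger-\mF_2\mF_2^\dagger$ into a structured set. That embedding, not convenience, is what forces the squared bond dimensions in the LR-MPO/LR-PEPO rows and the $t^2$ in the LR-S row. Your sharper factor-level bounds (unsquared ranks, $t\log(d^nr^{\textup{LR}}/t)$) are therefore true for the classes but could not be used downstream. The orders in the table are unaffected, though: the same volumetric, union, and per-core product arguments give them for the doubled sets. For the rank-$2r^{\textup{LR}}$ set this uses the standard SVD-based net that the paper cites.

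Two smaller corrections.

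\textbf{Gauge balancing for PEPS fails.} With loops, $\|\vf\|_2=1$ does not bound the product of core norms in \emph{any} gauge. Take a $2\times 2$ lattice with bond dimension $2$, and let every physical slice of the core at site $\ell$ be $c\,\mathrm{diag}(1,\omega_\ell)$ with $\prod_\ell\omega_\ell=-1+\delta$.
\begin{itemize}
\item Every amplitude equals $c^4\delta$, so the state can be normalized with $c^4\delta$ arbitrarily small times $c^4$.
\item The product of the slices around the plaquette has spectral radius $c^4$. This quantity is gauge-invariant and lower-bounds the product of the core norms.
\end{itemize}
So no gauge, and in particular no bond rescaling, keeps the core norms bounded. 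You need your second option, which is exactly the paper's choice: $\|\mX_{\ul{a}\,\ul{b}}\|_F\le L_{\ul{a}\,\ul{b}}$ is built into $\ol\setF_{\textup{PEPS}}$ and $\ol\setX_{\textup{PEPO}}$. The $\log n$ then survives only if the product of these bounds is at most polynomial in $n$, which the paper also assumes implicitly.

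\textbf{Norm choice for MPS/MPO.} Left-orthonormal cores have Frobenius norm $\sqrt{r_\ell}$. A Frobenius-norm net at radius $\epsilon/n$ therefore costs $\log(n\sqrt{r_\ell}/\epsilon)$ rather than $\log(n/\epsilon)$. To get the stated $\log n$, do as the paper does:
\begin{itemize}
\item cover the unit spectral-norm ball in spectral norm;
\item telescope with spectral norms for $\ell<n$, using the Frobenius norm only for the last core.
\end{itemize}
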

The detailed proof is provided in Appendix~\ref{Proof of ocvering number all}. The bounds in \Cref{conclusion of all covering numbers} show that the complexity of each structured quantum-state class is characterized by the logarithm of its covering number, which depends explicitly on the underlying structural parameters, such as sparsity, rank, or tensor-network bond dimensions. Although the covering-number scaling differs across models, the subsequent analyses depend only on these covering-number estimates. Consequently, \Cref{conclusion of all covering numbers} provides a unified complexity characterization that serves as the foundation for the sample complexity analysis developed in the remainder of this section.

We begin by formulating the estimators used for QST, which will serve as the basis for the subsequent sample complexity analysis. The estimators are constructed based on the empirical probability vector $\widehat{\vp}$ defined in Sections~\ref{sec:quantum_states}--\ref{sec:finite shot statistical noise}, which is generated under the combined effects of state preparation noise, measurement noise, and finite-shot statistical fluctuations. To reconstruct the unknown quantum state, one must specify a forward measurement model relating the density operator to the observed empirical probabilities. Depending on whether the experimental noise characteristics are available during reconstruction, we consider the following two estimation settings.

\paragraph*{Noise-Aware Quantum State Tomography}

We first consider the idealized setting in which the state preparation and measurement noise models, together with their corresponding noise parameters $(\lambda,a)$, are assumed to be known through prior device calibration or independent system characterization. Consequently, the reconstruction employs the same noisy forward model that governs the observation process. The corresponding least-squares objective is given by
\begin{eqnarray}
\label{loss function of unbiased solution}
f(\vrho)  = \frac{1}{2QK}\sum_{q=1}^{Q}\sum_{k=1}^{K} \bigg( \bigg\<(1-a) \mA_{q,k} + \frac{a}{K}\mId,   (1-\lambda)\vrho + \frac{\lambda}{d^n}\mId  \bigg\> -  \wh p_{q,k}   \bigg)^2.
\end{eqnarray}
Since the forward model coincides with the physical data-generation mechanism, this formulation is statistically well specified. The corresponding estimator is defined as
\begin{eqnarray}
\label{minimizer loss function of unbiased solution}
\wh\vrho = \argmin_{\vrho\in\setX_{\textup{type}}}f(\vrho).
\end{eqnarray}

Since many structured quantum-state models admit low-rank or tensor-network factorizations, it is often more convenient to optimize over an auxiliary parameterization rather than directly over the density operator. In particular, we consider a factorization of the form $\vrho = \mF \mF^\dagger$ with $\mF \in {\setF}_{\textup{type}}$. Substituting this parameterization into \eqref{loss function of unbiased solution} yields the equivalent optimization problem
\begin{eqnarray}
\label{loss function factorized}
f(\mF) = \frac{1}{2QK}\sum_{q=1}^{Q}\sum_{k=1}^{K} \bigg( \bigg\< (1-a)\mA_{q,k} + \frac{a}{K}\mId,\, (1-\lambda)\mF\mF^\dagger + \frac{\lambda}{d^n}\mId \bigg\> - \wh p_{q,k} \bigg)^2.
\end{eqnarray}
The corresponding estimator in the factorized parameter space is given by
\begin{eqnarray}
\label{minimizer F}
\wh \mF = \argmin_{\mF\in {\setF}_{\textup{type}}} f(\mF), \qquad \wh\vrho = \wh\mF\wh\mF^\dagger.
\end{eqnarray}
Both formulations are equivalent in terms of the induced density operator, while the factorized representation is more amenable to low-dimensional structured optimization and plays a central role in the subsequent sample complexity analysis. The following theorem characterizes its reconstruction error in terms of the intrinsic complexity of the underlying state class, the measurement ensemble, and the state preparation and measurement noise levels.
\begin{theorem}
\label{label:sample complexity of unbiased solution}

Suppose $Q$ POVMs $\{\mA_{q,1},\cdots, \mA_{q,K}\}_{q\in[Q]}$ satisfy \eqref{lower constant of second order information}, \eqref{upper constant of second order information}, and \eqref{upper constant of third order information}. Using each POVM to measure the quantum state $M$ times yields the empirical probabilities $\{\wh p_{q,k}\}_{q\in[Q],k\in[K]}$. Then, with probability at least $1-e^{-\Omega(\log N_{\textup{type}})}$, the estimator $\wh\vrho$ obtained from either the constrained least-squares formulation \eqref{loss function of unbiased solution} or its factorized counterpart \eqref{loss function factorized} satisfies
\begin{eqnarray}
\label{upper bound of solution for prior knowledge summary}
\|\wh\vrho - \vrho^\star\|_1 \leq O\bigg(\sqrt{\frac{((1-a)\beta(Q,K) + a\alpha_2(Q,K)/K)\rank(\vrho^\star)(\log N_{\textup{type}})}{(\alpha_1(Q,K))^2 (1-a)^2(1-\lambda)^2M}}\bigg),
\end{eqnarray}
where $N_{\textup{type}}$ denotes the covering number associated with the underlying structured quantum-state class summarized in Table~\ref{tab:covering-numbers}.
\end{theorem}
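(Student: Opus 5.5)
The argument is a basic-inequality analysis of the constrained least-squares estimator: a quadratic lower bound from $\alpha_1(Q,K)$, a uniform bound on the multinomial noise over a Frobenius-norm $\epsilon$-net, and a final conversion from Frobenius to trace norm.

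\textbf{Forward map.} For $\vrho\in\setX_{\textup{PHY}}$ the noisy forward model is affine in $\vrho$. Using $\<\mA_{q,k},\mId\>=\trace(\mA_{q,k})$, $\sum_k\mA_{q,k}=\mId$ and $\trace(\vrho)=1$, the predicted probability is
\[
\Big\<(1-a)\mA_{q,k}+\tfrac{a}{K}\mId,\,(1-\lambda)\vrho+\tfrac{\lambda}{d^n}\mId\Big\> = (1-a)(1-\lambda)\<\mA_{q,k},\vrho\> + c_{q,k},
\]
where $c_{q,k}$ does not depend on $\vrho$. The data satisfy $\wh p_{q,k}=p_{q,k}+\eta_{q,k}$, where $p_{q,k}$ is this expression evaluated at $\vrho^\star$ and $\eta_{q,k}$ is centered multinomial noise. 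Write $\Delta=\wh\vrho-\vrho^\star$ and $\gamma=(1-a)(1-\lambda)$.

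\textbf{Basic inequality and lower bound.} Optimality gives $f(\wh\vrho)\le f(\vrho^\star)$, which expands to
\[
\gamma^2\sum_{q,k}\<\mA_{q,k},\Delta\>^2 \le 2\gamma\sum_{q,k}\eta_{q,k}\<\mA_{q,k},\Delta\>.
\]
Since both $\wh\vrho$ and $\vrho^\star$ lie in $\setX$, the definition of $\alpha_1(Q,K)$ bounds the left side below by $\gamma^2\alpha_1\|\Delta\|_F^2$. Hence
\[
\gamma\,\alpha_1\|\Delta\|_F \le 2\sup_{\vrho\in\setX}\frac{\sum_{q,k}\eta_{q,k}\<\mA_{q,k},\vrho-\vrho^\star\>}{\|\vrho-\vrho^\star\|_F}.
\]

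\textbf{Main obstacle: the supremum of the noise process.} For a fixed direction $\vrho$, the sum is a sum over $q$ of independent centered multinomial linear forms. Its variance is at most
\[
\frac1M\sum_{q,k}p_{q,k}\<\mA_{q,k},\vrho-\vrho^\star\>^2 .
\]
I will use $p_{q,k}\le(1-a)\<\mA_{q,k},\widetilde{\vrho}^\star\>+a/K$ to bound this sum.
\begin{itemize}
\item The $a/K$ piece contributes at most $a\,\alpha_2\|\vrho-\vrho^\star\|_F^2/K$.
\item The $\widetilde{\vrho}^\star$ piece: $\widetilde{\vrho}^\star=(1-\lambda)\vrho^\star+\lambda\mId/d^n$. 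The $\vrho^\star$ part is controlled by $\beta$. The $\lambda\mId/d^n$ part is controlled by $\trace(\mA_{q,k})/d^n$, which I will absorb into the $\alpha_2/K$ term for the designs considered, or else simply assume is dominated.
\end{itemize}
Together this gives variance $\lesssim\sigma^2\|\vrho-\vrho^\star\|_F^2/M$ with $\sigma^2=(1-a)\beta+a\alpha_2/K$. A Bernstein (or sub-Gaussian, since the summands are bounded) inequality then gives a tail bound for each fixed direction. I will take a union bound over an $\epsilon$-net of $\setX$ (or of $\setF_{\textup{type}}$, mapped through $\mF\mapsto\mF\mF^\dagger$, which is $2$-Lipschitz in Frobenius norm since $\|\mF\|_F=1$). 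Lemma~\ref{conclusion of all covering numbers} bounds the net size by $N_{\textup{type}}$, so the supremum over the net is at most
\[
\sqrt{\sigma^2\log N_{\textup{type}}/M}
\]
with probability $1-e^{-\Omega(\log N_{\textup{type}})}$.

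\textbf{Extending from the net.} The delicate point is passing from the net to all of $\setX$ with normalized directions, since the normalization blows up when $\vrho$ is close to $\vrho^\star$. I would handle this with a standard peeling/self-bounding argument: write the process linearly in $\Delta$, approximate $\Delta$ by its net element, and absorb the residual, which is at most $\epsilon\cdot\sup$ by Cauchy--Schwarz with $\alpha_2$, into the left side. Choosing $\epsilon$ polynomially small affects only constants and the $\log n$ factors already present in the table. The Bernstein linear term is of lower order once $M\gtrsim\log N_{\textup{type}}$.

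\textbf{Conclusion.} Combining the steps yields
\[
\|\Delta\|_F\lesssim\sqrt{\frac{\sigma^2\log N_{\textup{type}}}{\alpha_1^2\gamma^2M}} .
\]
To pass to the trace norm, use $\|\Delta\|_1\le\sqrt{\rank(\Delta)}\,\|\Delta\|_F$. Here $\rank(\Delta)\le\rank(\wh\vrho)+\rank(\vrho^\star)\le2\rank(\vrho^\star)$ when the class has a fixed rank. In general, one bounds the trace distance of two states via the rank of $\vrho^\star$ through the standard argument $\|\Delta\|_1\le2\|P^\perp\Delta P^\perp\|_1$-type pinching, using that $\trace\Delta=0$ and $\Delta$ is PSD off the support of $\vrho^\star$. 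This gives the factor $\sqrt{\rank(\vrho^\star)}$ claimed in \eqref{upper bound of solution for prior knowledge summary}.
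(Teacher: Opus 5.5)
Your skeleton is the paper's: the basic inequality, the trace-zero cancellation giving $(1-a)^2(1-\lambda)^2\alpha_1(Q,K)\|\Delta\|_F^2$ on the left, a per-direction variance proxy built from $\beta$ and $\alpha_2$, a union bound over a net, and a Frobenius-to-trace conversion. The genuine gap is the step you flag as delicate, and your proposed handling does not give the stated bound.

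Write $Z(u)=\sum_{q,k}\eta_{q,k}\<\mA_{q,k},u\>$. You net $\setX$ (or $\setF_{\textup{type}}$) itself. After replacing $\wh\vrho$ by a net point $\vrho^{(p)}$, the residual $Z(\wh\vrho-\vrho^{(p)})$ runs along the normalized difference of an \emph{arbitrary} pair in $\setX$, not along a direction anchored at $\vrho^\star$. It is therefore not bounded by $\epsilon$ times the supremum you are controlling, so the self-bounding does not close. Near the diagonal such normalized differences are essentially tangent directions, which no fixed-scale cover of $\setX$ or $\setF_{\textup{type}}$ controls.

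The Cauchy--Schwarz fallback gives $\epsilon\sqrt{\alpha_2(Q,K)}\,\|\veta\|_2$, where $\|\veta\|_2$ is the norm of the whole $QK$-dimensional noise vector (of order $\sqrt{Q/M}$). This is not proportional to $\|\Delta\|_F$, so making it negligible forces $\epsilon$ to depend on $M$, $Q$ and the dimension. For Haar measurements and the tensor-network classes, $\epsilon$ must be polynomially small in $d^nM$. Since $\log N_\epsilon$ scales like the parameter count times $\log(1/\epsilon)$, the $\log n$ in Table~\ref{tab:covering-numbers} becomes a factor of order $n\log d+\log M$. That is not ``only constants.''

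The missing idea is to net the \emph{scale-invariant} set of normalized differences. The auxiliary sets $\ol\setX$ of Appendix~\ref{Proof of ocvering number all} have trace zero, unit Frobenius norm, and doubled structure: sparsity $2s$, rank $2r^{\textup{LR}}$, or bond dimensions $2r$ ($2r^2$ for the low-rank factorized classes). They contain $\Delta/\|\Delta\|_F$. For $u\in\ol\setX$ with net neighbour $u^{(p)}$, the residual $u-u^{(p)}$ splits into two elements of the same set (sparse, low-rank) or telescopes into $n$ single-core perturbations (MPO, PEPO). This yields $\sup_{\ol\setX}Z\le c\,\epsilon\sup_{\ol\setX}Z+\max_{\textup{net}}Z$, hence $\sup_{\ol\setX}Z\le 2\max_{\textup{net}}Z$ with $\epsilon=1/4$ or $1/(2n)$, independent of $M$. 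These are precisely the scales at which the table is computed.

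For MPS/PEPS the paper linearizes $\wh\vf\wh\vf^\dagger-\vf^\star{\vf^\star}^\dagger=\wh\vf(\wh\vf-\vf^\star)^\dagger+(\wh\vf-\vf^\star){\vf^\star}^\dagger$. It then nets the normalized factor difference (a doubled-bond MPS/PEPS) and returns to $\wh\vrho$ via $\|\wh\vf-\vf^\star\|_2\le\frac{1}{2(\sqrt2-1)}\|\wh\vrho-\vrho^\star\|_F$. Your $2$-Lipschitz map $\mF\mapsto\mF\mF^\dagger$ covers $\setX$, not these differences, so it does not replace this step.

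Separately, you may not simply assume away the $\lambda\mId/d^n$ part of $p_{q,k}$. The paper bounds $\sum_{q,k}\<\mA_{q,k},u\>^2\<\mA_{q,k},\mId/d^n\>$ by $\beta(Q,K)\|u\|_F^2$, using $\mId/d^n$ as the weighting state (it invokes $\mId/d^n\in\setX_{\textup{PHY}}$). That is where the coefficient $(1-a)(1-\lambda)+(1-a)\lambda=1-a$ on $\beta$ comes from. Your absorption into $\alpha_2/K$ requires $\trace(\mA_{q,k})\lesssim d^n/K$. Even then it yields a proxy of the form $(1-a)(1-\lambda)\beta+(a+(1-a)\lambda)\alpha_2/K$, not the stated one.

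Your pinching argument for the trace norm is fine and gives the $\sqrt{\rank(\vrho^\star)}$ factor up to a constant.
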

The detailed proof is provided in Appendix~\ref{sec: proof of sample complexity of unbisaed solution}. The effective rank parameter appearing in \eqref{upper bound of solution for prior knowledge summary} is specified according to the underlying state representation as
\begin{eqnarray}
\label{rank choice for different states}
\rank(\vrho^\star) =  \begin{cases}
    d^n, &\quad   \setX_{\textup{PHY}}, \setX_{\textup{S}}, \setX_{\textup{MPO}}, \setX_{\textup{PEPO}} \\
    1, &\quad  \setF_\textup{MPS}, \setF_{\textup{PEPS}}  \\
    r^{\textup{LR}}, &\quad  \setX_{\textup{LR}}, \setF_{\textup{LR}}, \setF_{\textup{LR-S}},\setF_{\textup{LR-MPO}}, \setX_{\textup{PEPO}}  \\
    \end{cases}.
\end{eqnarray}
Theorem~\ref{label:sample complexity of unbiased solution} establishes a unified non-asymptotic recovery guarantee for a broad family of structured quantum-state models under general POVM measurements. The reconstruction error is determined jointly by four complementary factors: the effective rank of the target state, the complexity of the admissible state class, the informativeness of the measurement ensemble, and the state preparation and measurement noise levels.

\begin{itemize}

\item The first factor is the effective rank of the ground-truth state. For physical states, sparse states, MPOs, and PEPOs, the theorem uses the worst-case bound $\rank(\vrho^\star)=d^n$, which corresponds to the largest possible rank within the admissible state class. Nevertheless, practical quantum systems are often approximately low rank, especially when the target state is close to a pure state or arises as the ground state of a local Hamiltonian. Consequently, the actual reconstruction error can be substantially smaller than the worst-case guarantee. For explicitly low-rank matrix models, the error scales linearly with the intrinsic rank $r^{\textup{LR}}$, while for pure-state tensor-network representations, including MPSs and PEPSs, the effective rank reduces to one, yielding the most favorable dependence on the state rank.

\item The second factor is the logarithm of the covering number, $\log N_{\textup{type}}$, which characterizes the intrinsic complexity of the underlying structured state class. The sample complexity therefore depends on the geometric complexity of the admissible model through its covering number. As summarized in Table~\ref{tab:covering-numbers}, the covering numbers of different structured quantum-state classes scale according to their intrinsic parameterizations and corresponding degrees of freedom. Consequently, the required number of state copies is determined by the intrinsic complexity of the underlying state representation rather than by the ambient Hilbert-space dimension.

\item The third factor is the normalized measurement complexity parameter $\frac{(1-a)\beta(Q,K)+a\alpha_2(Q,K)/K}
{(\alpha_1(Q,K))^2}$, which characterizes the statistical efficiency of the measurement ensemble. As established in \eqref{parameters of different POVMs}, this parameter remains of constant order for spherical $3$-design and $\delta$-approximate spherical $3$-design POVMs, whereas it decreases as $\Theta(1/Q)$ for unitary $3$-design and Haar-random projective measurements. Therefore, for randomized measurement ensembles, increasing the number of measurement settings $Q$ reduces the effective measurement complexity. As a consequence, for a fixed target reconstruction accuracy, the required number of state copies $M$ for each measurement setting can be reduced proportionally as $Q$ increases.

\item Finally, the multiplicative factor $\frac{1}{(1-a)^2(1-\lambda)^2} $ explicitly characterizes the impact of state preparation and measurement noise on the reconstruction accuracy. As the noise levels $a$ and $\lambda$ increase, the effective information contained in the measurement model decreases, resulting in a larger sample complexity for achieving the same reconstruction accuracy. In the ideal noiseless setting, namely $(a,\lambda)=(0,0)$, this factor reduces to one, and the bound reduces to the corresponding sample complexity for QST under the ideal measurement model.

\end{itemize}

Overall, Theorem~\ref{label:sample complexity of unbiased solution} provides a unified theoretical framework showing that the sample complexity of structured QST is governed jointly by the intrinsic complexity of the admissible state class, the effective rank of the target state, the quality of the measurement ensemble, and the state preparation and measurement noise levels. The theorem applies uniformly to a broad family of low-dimensional quantum-state models and general POVM measurements, thereby establishing a comprehensive characterization of the statistical complexity of structured QST.

\paragraph*{Noise-Unaware Quantum State Tomography}

In many practical quantum tomography experiments, however, the underlying noise characteristics are unavailable or only partially characterized. Consequently, most existing approaches ignore state preparation and measurement noise and reconstruct the quantum state under the ideal Born measurement model. This leads to the least-squares objective
\begin{eqnarray}
\label{loss function of biased solution}
g(\vrho) = \frac{1}{2QK}\sum_{q=1}^{Q}\sum_{k=1}^{K} \big( \<  \mA_{q,k} ,    \vrho   \> -  \wh p_{q,k}   \big)^2.
\end{eqnarray}
Although computationally simpler, the optimization model no longer matches the actual observation process whenever channel or measurement noise is present. Therefore, the resulting estimator is generally obtained under a statistically misspecified model, which introduces a systematic bias in addition to the finite-shot statistical error.  The corresponding estimator is defined as
\begin{eqnarray}
\label{minimizer rho misspecified}
\wh\vrho = \argmin_{\vrho\in\setX_{\textup{type}}} g(\vrho).
\end{eqnarray}

In view of the above discussion, many structured quantum states admit a factorized representation of the form $\vrho = \mF \mF^\dagger$ with $\mF \in {\setF}_{\textup{type}}$. Substituting this parameterization into \eqref{loss function of biased solution} yields the equivalent optimization problem
\begin{eqnarray}
\label{loss function factorized misspecified}
g(\mF) = \frac{1}{2QK}\sum_{q=1}^{Q}\sum_{k=1}^{K} \big( \langle \mA_{q,k}, \mF\mF^\dagger \rangle - \wh p_{q,k} \big)^2.
\end{eqnarray}
The corresponding estimator in the factorized parameter space is given by
\begin{eqnarray}
\label{minimizer F misspecified}
\wh \mF = \argmin_{\mF\in {\setF}_{\textup{type}}} g(\mF), \qquad \wh\vrho = \wh\mF\wh\mF^\dagger.
\end{eqnarray}
Besides the complexity parameters introduced in \Cref{subsec:Measurement Quality Characterization}, we further define the following measurement complexity parameters:
\begin{eqnarray}
    \label{lower constant of second order information cross term}
    &\!\!\!\!\!\!\!\!&\gamma_1(Q,K)  := \inf_{\vrho_1, \vrho_2 \in \setX}\frac{\sum_{q=1}^{Q}\sum_{k=1}^{K}\<\mA_{q,k}, \vrho_1\>\<\mA_{q,k}, \vrho_2\> }{\trace(\vrho_1\vrho_2)},\\
    \label{upper constant of second order information cross term}
    &\!\!\!\!\!\!\!\!&\gamma_2(Q,K) := \sup_{\vrho_1, \vrho_2 \in \setX}\frac{\sum_{q=1}^{Q}\sum_{k=1}^{K}\<\mA_{q,k}, \vrho_1\>\<\mA_{q,k}, \vrho_2\> }{\trace(\vrho_1\vrho_2)}.
\end{eqnarray}
The parameters $\gamma_1(Q,K)$ and $\gamma_2(Q,K)$ characterize the smallest and largest measurement-induced bilinear forms over pairs of admissible states, respectively. Unlike $\alpha_1(Q,K)$ and $\alpha_2(Q,K)$, which quantify the extremal quadratic forms associated with a single perturbation, $\gamma_1(Q,K)$ and $\gamma_2(Q,K)$ characterize the corresponding bilinear forms and therefore capture the interaction between two different directions in the admissible state space. In particular, setting $ \vrho_1=\vrho_2=\vrho-\vrho^\star$, reduces \eqref{lower constant of second order information cross term} and \eqref{upper constant of second order information cross term} to \eqref{lower constant of second order information} and \eqref{upper constant of second order information}, respectively. Consequently, $\gamma_1(Q,K)$ and $\gamma_2(Q,K)$ can be viewed as natural bilinear extensions of $\alpha_1(Q,K)$ and $\alpha_2(Q,K)$, respectively, and will play a key role in quantifying the model misspecification bias in the subsequent analysis. Specifically, the model misspecification bias is governed by the normalized complexity parameter
\begin{eqnarray}
\label{normalized complexity ratio}
\frac{\lambda(1-a)\gamma_2(Q,K) + a \gamma_1(Q,K)}{\alpha_1(Q,K)}.
\end{eqnarray}
To illustrate its behavior, we evaluate this quantity for several commonly used measurement ensembles. According to \cite[Lemma~8]{qin2024sample} and \cite[Eqs.~(S35) and (S36)]{huang2020predicting}, for $\setX\subset\setX_{\textup{PHY}}$, we obtain
\begin{eqnarray}
\label{parameters of different POVMs additional parameter}
&\!\!\!\!\!\!\!\!&\frac{\lambda(1-a)\gamma_2(Q,K) + a \gamma_1(Q,K)}{\alpha_1(Q,K)}\nonumber\\
&\!\!\!\!=\!\!\!\!&
\begin{dcases}
 \Theta(\lambda(1-a) + a ),
& \text{Spherical $3$-designs/Haar/Unitary $3$-designs},\\
\Theta\bigg(\frac{\lambda(1-a)(1+\delta)  + a(1-\delta)}{1-\delta}\bigg),
& \text{$\delta$-approximate spherical $3$-designs}.
\end{dcases}
\end{eqnarray}
The estimates in \eqref{parameters of different POVMs additional parameter} show that the normalized bias parameter remains of constant order for spherical $3$-designs, unitary $3$-designs and Haar random projective measurements. For $\delta$-approximate spherical $3$-designs, the additional factor depending on $\delta$ reflects the degradation caused by the approximation error and disappears as $\delta\to0$. Consequently, the model misspecification bias remains uniformly controlled for these measurement ensembles and depends primarily on the noise parameters $a$ and $\lambda$.

The following theorem establishes the corresponding sample complexity guarantee for the noise-unaware estimator.
\begin{theorem}
\label{label:sample complexity of biased solution}

Suppose $Q$ POVMs $\{\mA_{q,1},\cdots, \mA_{q,K}\}_{q\in[Q]}$ satisfy \eqref{lower constant of second order information}, \eqref{upper constant of second order information},  \eqref{upper constant of third order information}, \eqref{lower constant of second order information cross term} and \eqref{upper constant of second order information cross term}. Using each POVM to measure the quantum state $M$ times yields the empirical probabilities $\{\wh p_{q,k}\}_{q\in[Q],k\in[K]}$. Then, with probability at least $1-e^{-\Omega(\log N_{\textup{type}})}$, the estimator $\wh\vrho$ obtained from either the constrained least-squares formulation \eqref{loss function of biased solution} or its factorized counterpart \eqref{loss function factorized misspecified} satisfies
\begin{eqnarray}
\label{biased error bound of trace norm main paper}
    \|\wh{\vrho} - \vrho^\star\|_1  &\!\!\!\!\leq\!\!\!\!& O\bigg(\sqrt{\frac{((1-a)\beta(Q,K) + a\alpha_2(Q,K)/K)\rank(\vrho^\star)(\log N_{\textup{type}})}{(\alpha_1(Q,K))^2 M}}\bigg)\nonumber\\
    &\!\!\!\!\!\!\!\!& + \frac{4\sqrt{\rank(\vrho^\star)}\big( \lambda(1-a)\gamma_2(Q,K) + a\gamma_1(Q,K) \big)}{\alpha_1(Q,K)},
\end{eqnarray}
where $N_{\textup{type}}$ denotes the covering number associated with the underlying structured quantum-state class summarized in Table~\ref{tab:covering-numbers}.
\end{theorem}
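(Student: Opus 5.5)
We follow the basic-inequality strategy that presumably underlies Theorem~\ref{label:sample complexity of unbiased solution}. We isolate the model mismatch as a deterministic perturbation of the data. Write $\vp$ for the true noisy probabilities in \eqref{The noisy probability vector}. Expanding the noise models gives
\[
p_{q,k}=\<\mA_{q,k},\vrho^\star\> + b_{q,k},\qquad
b_{q,k}= -\big(\lambda(1-a)+a\big)\<\mA_{q,k},\vrho^\star\> + \frac{\lambda(1-a)}{d^n}\trace(\mA_{q,k}) + \frac{a}{K}.
\]
Thus the ideal-model residual is $\<\mA_{q,k},\vrho\>-\wh p_{q,k} = \<\mA_{q,k},\vrho-\vrho^\star\> - b_{q,k} - \eta_{q,k}$, where $\eta_{q,k}=\wh p_{q,k}-p_{q,k}$ is the centered multinomial noise.

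Since $\wh\vrho$ minimizes $g$ over $\setX_{\textup{type}}$ and $\vrho^\star\in\setX_{\textup{type}}$, we have $g(\wh\vrho)\le g(\vrho^\star)$. Expanding the squares with $\Delta=\wh\vrho-\vrho^\star$ gives the basic inequality
\[
\sum_{q,k}\<\mA_{q,k},\Delta\>^2 \le 2\sum_{q,k}\eta_{q,k}\<\mA_{q,k},\Delta\> + 2\sum_{q,k} b_{q,k}\<\mA_{q,k},\Delta\>.
\]
By \eqref{lower constant of second order information}, the left side is at least $\alpha_1\|\Delta\|_F^2$.

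\textbf{Stochastic term.} We handle it exactly as in the proof of Theorem~\ref{label:sample complexity of unbiased solution}, with the noise parameters set to $(0,0)$ in the forward operator. Each $\eta_q$ is a multinomial average whose covariance is bounded by $\mathrm{diag}(\vp_q)/M$. Since $p_{q,k}\le (1-a)\<\mA_{q,k},\vrho^\star\>+\lambda$-type terms plus $a/K$, the variance proxy is $\sum_{q,k} p_{q,k}\<\mA_{q,k},\Delta\>^2/M \lesssim ((1-a)\beta+a\alpha_2/K)\|\Delta\|_F^2/M$. (We absorb the $\lambda/d^n$ contribution via $\trace(\mA_{q,k})/d^n$ into these terms, as the noise-aware proof must also have done.) A Bernstein bound for multinomial linear forms, combined with a union bound over an $\epsilon$-net of the normalized difference set (covering number $N_{\textup{type}}^2$, Lemma~\ref{conclusion of all covering numbers}), gives with probability $1-e^{-\Omega(\log N_{\textup{type}})}$
\[
\sum_{q,k}\eta_{q,k}\<\mA_{q,k},\Delta\>\lesssim \|\Delta\|_F\sqrt{((1-a)\beta+a\alpha_2/K)\log N_{\textup{type}}/M}.
\]
The net discretization error is absorbed using $\alpha_2$.

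\textbf{Bias term.} This is the new ingredient. The constant parts of $b_{q,k}$ cancel against $\Delta$ because $\sum_k\<\mA_{q,k},\Delta\>=\trace\Delta=0$, so the $a/K$ term contributes nothing. What remains is $-(\lambda(1-a)+a)\<\mA_{q,k},\vrho^\star\>$ together with $\frac{\lambda(1-a)}{d^n}\<\mA_{q,k},\mId\>$. We bound the resulting sums $\sum_{q,k}\<\mA_{q,k},\vrho_1\>\<\mA_{q,k},\Delta\>$ using $\gamma_1,\gamma_2$ from \eqref{lower constant of second order information cross term}–\eqref{upper constant of second order information cross term}. To do so, we split $\Delta=\wh\vrho-\vrho^\star$ into two admissible states. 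The cross terms then lie between $\gamma_1\trace(\vrho^\star\wh\vrho)$-type and $\gamma_2\trace(\cdot)$-type quantities. Assigning the $\lambda(1-a)$ part to $\gamma_2$ and the $a$ part to $\gamma_1$, and then using $|\trace(\vrho_1\Delta)|\le\|\vrho_1\|_F\|\Delta\|_F\le\|\Delta\|_F$, yields a bound of the form $2(\lambda(1-a)\gamma_2+a\gamma_1)\|\Delta\|_F$.

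This step is the main obstacle. The signs of the infimum and supremum have to be organized carefully, since the difference of two bilinear forms is only one-sided controlled by $\gamma_1,\gamma_2$. We would first try pairing the upper bound for the $\wh\vrho$ term with the lower bound for the $\vrho^\star$ term, and track the constants to obtain the stated factor $4$.

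\textbf{Combining.} Dividing the basic inequality by $\alpha_1\|\Delta\|_F$ gives
\[
\|\Delta\|_F\lesssim\sqrt{\frac{(\dots)\log N_{\textup{type}}}{\alpha_1^2M}}+\frac{2(\lambda(1-a)\gamma_2+a\gamma_1)}{\alpha_1}.
\]
Finally, $\|\Delta\|_1\le\sqrt{2\,\rank(\vrho^\star)}\,\|\Delta\|_F$, with the effective rank chosen per class as in \eqref{rank choice for different states} (for example, $\rank(\Delta)\le 2r^{\textup{LR}}$ in the factorized classes). This converts the bound to trace norm and produces \eqref{biased error bound of trace norm main paper}. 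The factorized formulation \eqref{loss function factorized misspecified} gives the same $\wh\vrho$, so the argument applies verbatim.
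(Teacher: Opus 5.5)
Your setup matches the paper's. You use the same basic inequality from $g(\wh\vrho)\le g(\vrho^\star)$, and your $b_{q,k}=p_{q,k}-\langle\mA_{q,k},\vrho^\star\rangle$ is exactly the paper's $L_{q,k}$. The $a/K$ constant drops out by $\trace\Delta=0$, and the stochastic term is deferred to the noise-aware argument.

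The genuine gap is the bias step. You leave it open, and the route you sketch cannot deliver the stated bound. Write $S(X,Y)=\sum_{q,k}\langle\mA_{q,k},X\rangle\langle\mA_{q,k},Y\rangle$, so the bias term is $\lambda(1-a)S(\mId/d^n-\vrho^\star,\Delta)-aS(\vrho^\star,\Delta)$. If you split $\Delta=\wh\vrho-\vrho^\star$ and apply the definitions of $\gamma_1,\gamma_2$ to admissible pairs, the signs are forced:
\[
-aS(\vrho^\star,\Delta)\le a\gamma_2\|\vrho^\star\|_F^2-a\gamma_1\trace(\vrho^\star\wh\vrho)=-a\gamma_1\trace(\vrho^\star\Delta)+a(\gamma_2-\gamma_1)\|\vrho^\star\|_F^2 .
\]
No other pairing is available. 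The residual $a(\gamma_2-\gamma_1)\|\vrho^\star\|_F^2$ does not shrink with $\Delta$, so you cannot divide by $\alpha_1\|\Delta\|_F$. Solving the resulting quadratic only gives a floor of order $\sqrt{a(\gamma_2-\gamma_1)/\alpha_1}$, not the linear bias in the statement. The $\lambda(1-a)$ part is worse: $\mId/d^n\notin\setX$ for low-rank, MPS or PEPS classes, so the state-pair definitions say nothing about $S(\mId/d^n,\cdot)$.

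The paper never splits $\Delta$. It applies the bilinear bounds with $\Delta$ itself as an argument:
\[
(1-a)S\Big(\frac{\lambda}{d^n}\mId-\lambda\vrho^\star,\Delta\Big)\le\gamma_2(1-a)\trace\Big(\Big(\frac{\lambda}{d^n}\mId-\lambda\vrho^\star\Big)\Delta\Big),\qquad -aS(\vrho^\star,\Delta)\le-a\gamma_1\trace(\vrho^\star\Delta).
\]
Only then does it use $\trace\Delta=0$ to kill the identity contribution. This leaves $-(\lambda(1-a)\gamma_2+a\gamma_1)\trace(\vrho^\star\Delta)\le(\lambda(1-a)\gamma_2+a\gamma_1)\|\Delta\|_F$, since $\|\vrho^\star\|_F\le 1$. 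That trace-zero cancellation \emph{after} the bilinear bound is the missing step.

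This step reads $\gamma_1,\gamma_2$ as two-sided controls of $S(X,\Delta)$ along trace-zero error directions, with the non-state argument $\frac{\lambda}{d^n}\mId-\lambda\vrho^\star$. That is more than their literal definition over pairs of states. It is exact for $2$-design POVMs, where $S(X,\Delta)$ is a fixed multiple of $\trace(X\Delta)$ whenever $\trace\Delta=0$. Your sign worry is therefore legitimate, and closing the argument requires adopting that reading as a hypothesis. Under a pure ratio reading, $\trace(\vrho^\star\Delta)\le 0$ for pure $\vrho^\star$, so the $a$-term would call for $\gamma_2$ rather than $\gamma_1$.

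A rigorous linear bias from the stated definitions is available via Cauchy--Schwarz, $|S(X,\Delta)|\le\sqrt{S(X,X)}\sqrt{\alpha_2}\,\|\Delta\|_F$. However, it gives constants like $\sqrt{\gamma_2\alpha_2}$ rather than those in the theorem.

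On the constant: the paper's explicit $4$ comes from a bias bound with constant $1$, the factor $2$ of the basic inequality, and the $2$ in $\|\Delta\|_1\le 2\sqrt{\rank(\vrho^\star)}\|\Delta\|_F$. Your claimed bias of $2(\lambda(1-a)\gamma_2+a\gamma_1)\|\Delta\|_F$, combined with your basic inequality, gives $4(\cdots)/\alpha_1$ in Frobenius norm. Your combining display silently drops this factor, and it would exceed $4$ after the trace-norm conversion.
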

The detailed proof is provided in Appendix~\ref{sec: proof of sample complexity of bisaed solution}. Compared with the recovery guarantee in Theorem~\ref{label:sample complexity of unbiased solution}, the error bound in \eqref{biased error bound of trace norm main paper} exhibits two fundamental differences.
\begin{itemize}

\item First, unlike the bound in Theorem~\ref{label:sample complexity of unbiased solution}, the multiplicative factor $\frac{1}{(1-a)^2(1-\lambda)^2} $ does not appear in the statistical error term. This difference arises because the reconstruction is performed under the ideal Born measurement model, without incorporating the state preparation and measurement noise into the forward model. Consequently, the statistical error depends only on the sampling process and retains the same order as in the noiseless setting.

\item  Second, the model misspecification introduces the additional bias term $\frac{\sqrt{\rank(\vrho^\star)}\big(\lambda(1-a)\gamma_2(Q,K)+a\gamma_1(Q,K)\big)}{\alpha_1(Q,K)} $, which is independent of the  number of total state copies $QM$. Consequently, increasing the number of measurements suppresses only the statistical error, whereas the deterministic bias persists. As a result, the reconstruction error bound contains a non-vanishing bias term unless the state preparation and measurement noise are incorporated into the reconstruction model.

\end{itemize}
Overall, \eqref{biased error bound of trace norm main paper} demonstrates that ignoring state preparation and measurement noise fundamentally changes the asymptotic behavior of the estimator: increasing the number of total state copies suppresses only the statistical error, whereas the deterministic bias persists and ultimately limits the achievable reconstruction accuracy.

\section{Numerical Experiments}
\label{sec:simulation}

We conduct numerical QST experiments using Haar-random projective measurements to validate the theoretical predictions established in the preceding sections. Unless otherwise specified, all experiments are performed on a six-qubit system with $Q=100$ Haar-random unitary measurement settings and $M=1000$ state copies per POVM. Two representative target states are considered. The first is a thermal state with temperature $T=0.2$~\cite{hartmann2004existence}, which is well approximated by low-rank and tensor-network models. The second is a GHZ state~\cite{jameson2024optimal}, whose density matrix possesses both low-rank and sparse structures and can also be represented exactly by a low-rank MPO (equivalently, an MPS since $r^{\textup{LR}}=1$). Accordingly, for the thermal state we compare the recovery performance over $\setX_{\textup{PHY}}$, $\setF_{\textup{LR}}$, $\setF_{\textup{LR\text{-}MPO}}$, and $\setF_{\textup{LR\text{-}PEPO}}$, whereas for the GHZ state we compare $\setX_{\textup{PHY}}$, $\setF_{\textup{LR}}$, $\setF_{\textup{LR\text{-}S}}$, and $\setF_{\textup{LR\text{-}MPO}}$.

Since the primary objective of this paper is to evaluate statistical performance rather than optimization algorithms, we employ first-order optimization methods for all models. Specifically, $\setX_{\textup{PHY}}$, $\setF_{\textup{LR}}$, $\setF_{\textup{LR\text{-}S}}$, and $\setF_{\textup{LR\text{-}MPO}}$ are optimized using projected gradient
descent~\cite{chen2015fast,bahmani2016learning,chen2019non,qin2024quantum,qin2024sample,qin2026structured}, where each gradient step is followed by a projection onto the corresponding structured model. Specifically, for $\setX_{\textup{PHY}}$ and $\setF_{\textup{LR}}$, the projection reduces to normalizing $\mF$ to have unit Frobenius norm. For $\setF_{\textup{LR-S}}$, the projection is implemented by retaining the $t$ entries of $\mF$ with the largest magnitudes, followed by Frobenius normalization. For $\setF_{\textup{LR-MPO}}$, the projection is performed via tensor-train SVD (TT-SVD)~\cite{oseledets2011tensor}, followed by Frobenius normalization. Since PEPOs do not admit a canonical representation~\cite{orus2019tensor}, the projection step is not available for $\setF_{\textup{LR-PEPO}}$. Instead, we parameterize the model directly by its tensor factors with a fixed bond dimension and optimize the resulting objective using gradient descent~\cite{qin2024guaranteed,qin2025scalable}. Step sizes are tuned independently for each model. We run $400$ iterations for the thermal state and $200$ iterations for the GHZ state, with target ranks $r^{\textup{LR}}=2$ and $r^{\textup{LR}}=1$, respectively. For the LR-MPO model, the bond dimension is determined adaptively by TT-SVD with tolerance $10^{-14}$ and position $B=3$ in \eqref{the set of LR-MPO states}. For the LR-PEPO model, we choose $q=2$, $p=3$, bond dimension $4$, and $B_1=1$, $B_2=2$ in \eqref{the set of LR-PEPO states}. Under the noise-aware loss \eqref{loss function factorized}, the sparsity parameter is set to $t=8$ for $a=\lambda\in\{0,0.1,0.2\}$ and to $t=16$ otherwise. Under the noise-unaware loss \eqref{loss function factorized misspecified}, we use $t=8$ only when $a=\lambda=0$ and $t=16$ otherwise. We note that although the GHZ state has true sparsity $t=2$, a larger threshold level is required in practice to achieve reliable support identification and stable optimization, consistent with previous observations for iterative hard thresholding methods~\cite{blumensath2009iterative}.

The recovery performance is evaluated by the trace-norm reconstruction error. \Cref{Thermal aware,Thermal unaware} illustrate the trace-norm reconstruction errors of different structural models for the thermal state under noise-aware and noise-unaware QST, respectively. For all four structural models, the reconstruction error increases monotonically with both the depolarizing parameter $a$ and the measurement noise level $\lambda$. Moreover, the degradation is substantially smaller under the proposed noise-aware formulation than under the conventional noise-unaware formulation, corroborating the advantage of the unbiased estimator established in our theoretical analysis. Furthermore, exploiting increasingly compact low-dimensional structures consistently improves reconstruction accuracy, with the LR-PEPO model achieving the best overall performance.

\Cref{GHZ aware,GHZ unaware} present the corresponding results for the GHZ state. Similar qualitative behavior is observed: the reconstruction error increases with both $a$ and $\lambda$, while the noise-aware formulation exhibits a smaller degradation under both state preparation and measurement noises. The LR-S model achieves the smallest reconstruction error, indicating that explicitly incorporating sparsity is particularly effective for mitigating the adverse effects of noise in GHZ-state tomography.

\begin{figure}[!ht]
\centering
\subfigure[]{
\begin{minipage}[t]{0.45\textwidth}
\centering
\includegraphics[width=7cm]{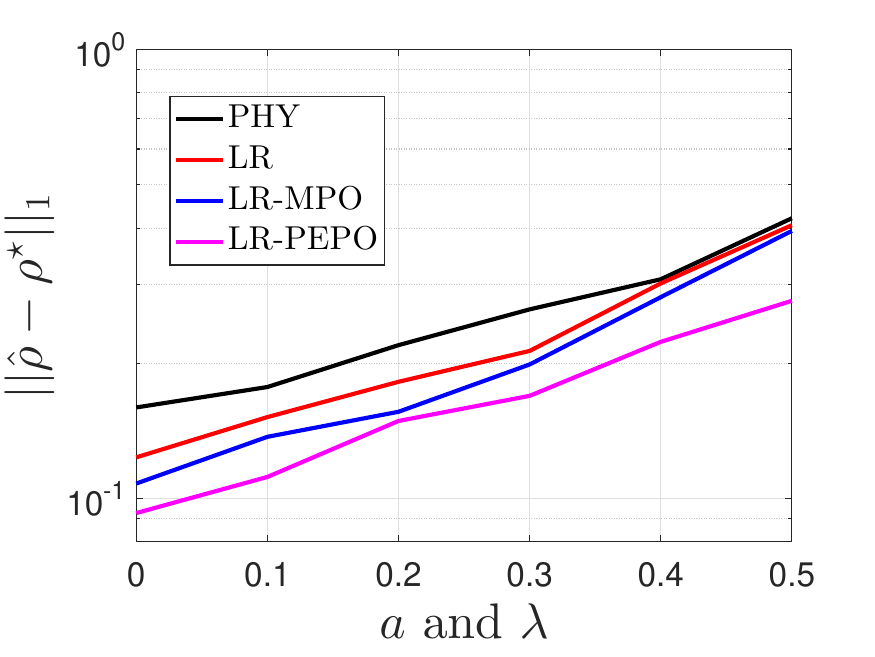}
\end{minipage}
\label{Thermal aware}
}
\subfigure[]{
\begin{minipage}[t]{0.45\textwidth}
\centering
\includegraphics[width=7cm]{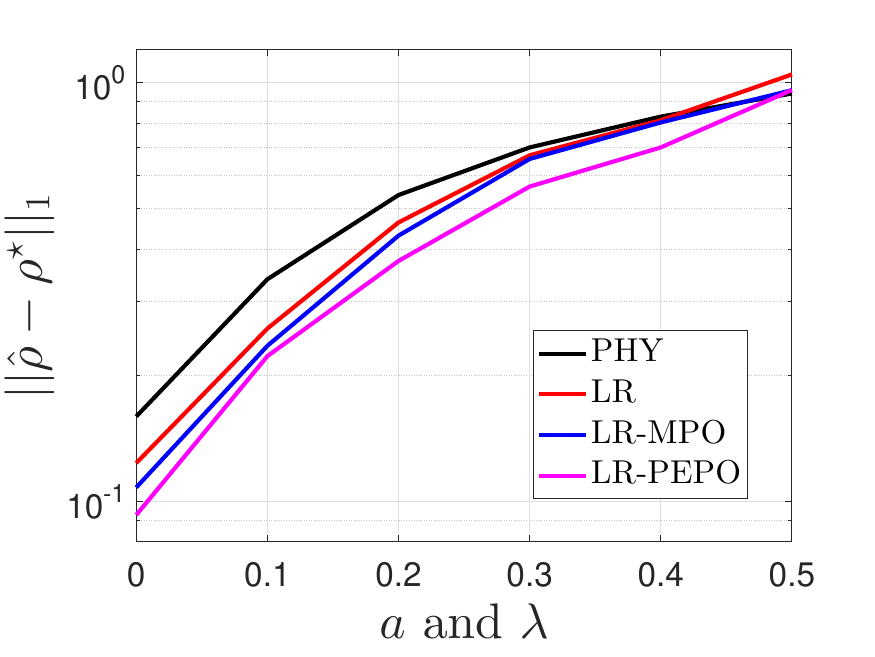}
\end{minipage}
\label{Thermal unaware}
}
\caption{Trace-norm reconstruction error for the thermal state under different values of $a$ and $\lambda$: (a) noise-aware QST, (b) noise-unaware QST.}
\end{figure}

\begin{figure}[!ht]
\centering
\subfigure[]{
\begin{minipage}[t]{0.45\textwidth}
\centering
\includegraphics[width=7cm]{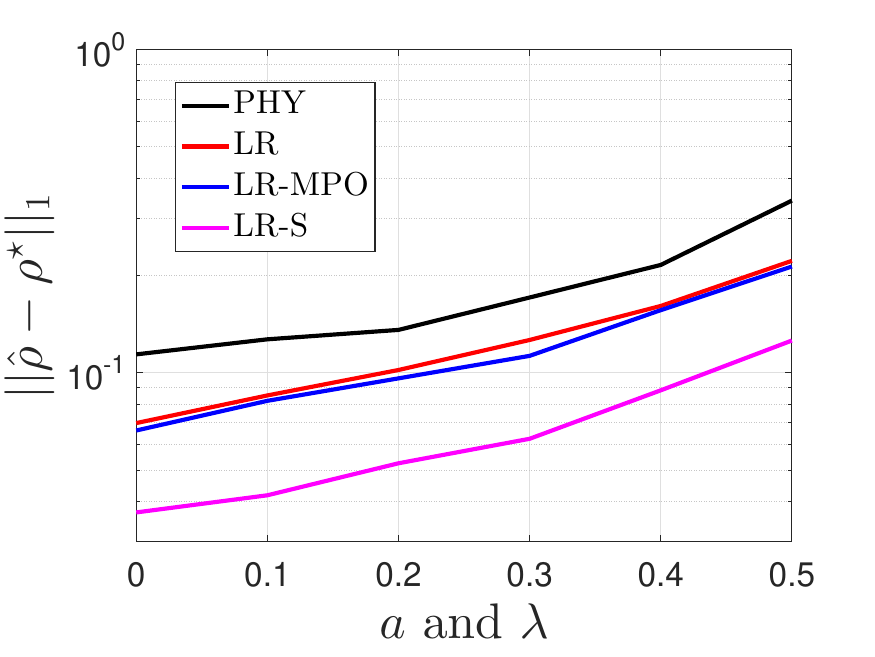}
\end{minipage}
\label{GHZ aware}
}
\subfigure[]{
\begin{minipage}[t]{0.45\textwidth}
\centering
\includegraphics[width=7cm]{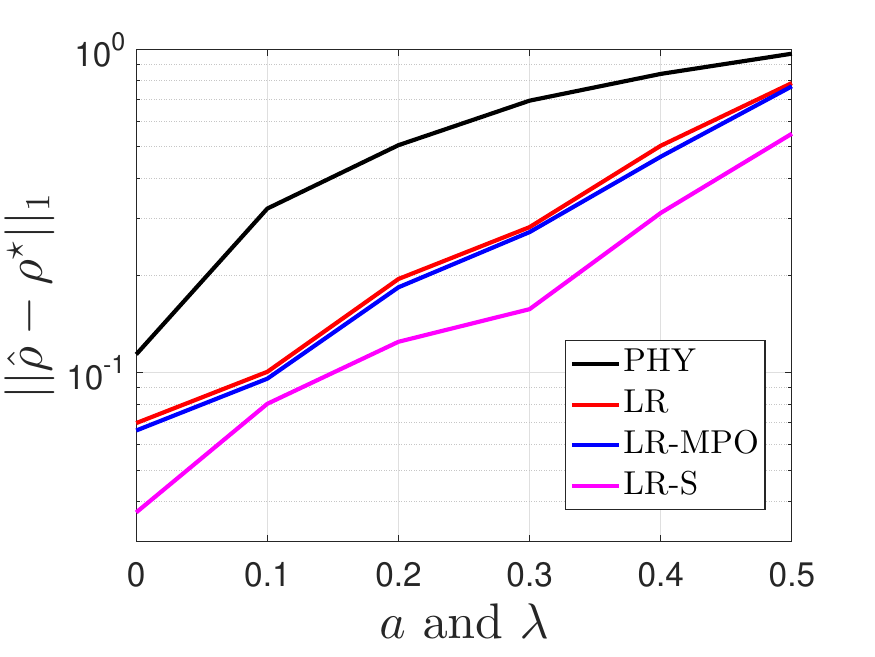}
\end{minipage}
\label{GHZ unaware}
}
\caption{Trace-norm reconstruction error for the GHZ state under different values of $a$ and $\lambda$: (a) noise-aware QST, (b) noise-unaware QST.}
\end{figure}

\section{Conclusion}
\label{sec: conclusion}

In this paper, we established a unified theoretical framework for analyzing the sample complexity of structured QST under realistic noisy observations arising from state preparation noise, measurement noise, and finite-shot statistical noise. The proposed framework applies to a broad family of structured quantum-state classes, including general mixed states, sparse states, low-rank states, MPSs, MPOs, PEPSs, PEPOs, low-rank and sparse states, LR-MPOs, and LR-PEPOs. A unified covering-number analysis was developed to characterize the intrinsic complexity of all considered structured state classes. Within this framework, we established unified non-asymptotic sample complexity guarantees for two constrained least-squares estimators. For the noise-aware estimator, we showed that the sample complexity is jointly characterized by the complexity of the underlying structured state class, the complexity of the measurement ensemble, and the state preparation and measurement noise levels. For the noise-unaware estimator, we further established a recovery guarantee consisting of both a statistical error term and an additional deterministic bias term caused by the mismatch between the reconstruction model and the noisy observation process. Unlike the statistical error, the bias term cannot be eliminated by increasing the number of state copies, thereby quantitatively characterizing the fundamental performance gap between noise-aware and noise-unaware QST.

One important direction for future research is to establish a unified theory for the necessary sample complexity of structured QST under realistic noisy observations. While this work provides unified sufficient conditions for a broad family of structured quantum-state classes, corresponding necessary conditions remain largely understood only through problem-specific techniques, such as the quantum Cram\'{e}r--Rao bound \cite{qin2026statistical}, minimax lower bounds \cite{flammia2012quantum,qin2024sample}, Holevo's theorem \cite{haah2017sample}, or Assouad's method \cite{acharya2025pauli}. Developing a unified information-theoretic framework that characterizes the fundamental sample complexity limits across different structured quantum-state classes remains an important open problem.

Another interesting direction is to extend the measurement complexity analysis to local measurement schemes \cite{lancien2013distinguishing,brandao2020fast,jameson2024optimal}. In this paper, the measurement properties are characterized primarily for global POVMs, for which explicit bounds of the associated measurement complexity parameters can be established through design-based arguments. However, local POVMs, which are experimentally more accessible and widely used in large-scale quantum systems, exhibit a fundamentally different tensor-product measurement structure. Another important open problem is to develop a unified characterization of the measurement complexity parameters for local measurement ensembles and incorporate it into the sample complexity analysis of structured QST.

Finally, the present work focuses on depolarizing noise as a representative model for state preparation and measurement imperfections. Although depolarizing noise captures a broad class of error effects and admits a tractable theoretical analysis, practical quantum devices are also affected by other imperfections, including coherent control errors~\cite{knill2008randomized,nielsen2010quantum}, relaxation and decoherence effects~\cite{nielsen2010quantum}, measurement and readout errors~\cite{maciejewski2020mitigation}, and correlated noise~\cite{clemens2004quantum}. Extending the proposed framework to these more general noise mechanisms and the corresponding quantum channels remains an important direction for future research.

\section{Acknowledgments}
\label{sec: ack}

ZQ gratefully acknowledges support from the MICDE Research Scholars Program at the University of Michigan.

\newpage

\appendices

\section{Proof of \Cref{conclusion of all covering numbers}}
\label{Proof of ocvering number all}

\begin{proof}
We establish the covering number bounds for the structured quantum-state classes introduced in \Cref{sec:structured quantum states}.
\begin{itemize}
\item \textbf{Physical states} We define the auxiliary set $\ol\setX_{\textup{PHY}} = \{ \vrho\in\C^{d^n\times d^n}:  \ \trace(\vrho) = 0, \|\vrho\|_F =  1     \}$. Fixing a collection $\{\vrho^{(1)}, \dots, \vrho^{(N_{\textup{PHY}})}\}\subset \wt\setX_{\textup{PHY}}\subset \ol\setX_{\textup{PHY}}$, we obtain an $\epsilon$-net for $\ol\setX_{\textup{PHY}}$, i.e., $\sup_{\vrho\in\ol\setX_\textup{PHY}}\min_{p\in[N_{\textup{PHY}}]}\|\vrho - \vrho^{(p)}\|_F\leq \epsilon$, whose covering number satisfies \cite{candes2011tight}
    \begin{eqnarray}
    \label{Covering number of physical sets}
    N_{\textup{PHY}}\leq \left(\frac{9}{\epsilon}\right)^{d^{2n}}.
    \end{eqnarray}
    Note that the constraint $\trace(\vrho)=0$ is not explicitly enforced in the covering-number estimate. Since it only imposes a single linear constraint on the ambient space, incorporating it would not change the asymptotic order of the covering number. The same convention is adopted throughout all subsequent covering-number estimates.

\item \textbf{Sparse states} Similarly, we define the auxiliary set $\ol\setX_{\textup{S},s} = \{ \vrho\in\C^{d^n\times d^n}:  \ \trace(\vrho) = 0, \|\vrho\|_F =  1, \|\vrho\|_{0} = s  \}$. Fixing a collection $\{\vrho^{(1)}, \dots, \vrho^{(N_{\textup{S},s})}\}\subset \wt\setX_{\textup{S},s}\subset \ol\setX_{\textup{S},s}$, we obtain an $\epsilon$-net for $\ol\setX_{\textup{S},s}$, i.e., $\sup_{\vrho\in\ol\setX_{\textup{S},s}}\min_{p\in[N_{\textup{S},s}]}\|\vrho - \vrho^{(p)}\|_F\leq \epsilon$, whose covering number satisfies \cite{vershynin2009role}
    \begin{eqnarray}
    \label{Covering number of sparse sets}
    N_{\textup{S},s}\leq \left(\frac{2Cd^{2n}}{s\epsilon}\right)^{2s},
    \end{eqnarray}
    where $C$ is a positive constant.

\item \textbf{Low-rank states} Analogously, we define the auxiliary set $\ol\setX_{\textup{LR},r^{\textup{LR}} } = \{ \vrho\in\C^{d^n\times d^n}:  \ \trace(\vrho) = 0, \|\vrho\|_F =  1,   \rank(\vrho) = r^{\textup{LR}}  \}$. For any fixed value of $\vrho^{(p)}\in \wt\setX_{\textup{LR},r^{\textup{LR}}}\subset \ol\setX_{\textup{LR},r^{\textup{LR}}}$, we can construct a $\epsilon$-net $\{\vrho^{(1)}, \dots, \vrho^{(N_{\textup{LR},r^{\textup{LR}}})}  \}$  such that $\sup_{\vrho\in\ol\setX_{\textup{LR}},r^{\textup{LR}}}\min_{p\in[N_{\textup{LR},r^{\textup{LR}}}]}\|\vrho  - \vrho^{(p)}\|_F\leq \epsilon$ with covering number \cite{candes2011tight}
    \begin{eqnarray}
    \label{Covering number of low-rank sets}
    N_{\textup{LR},r^{\textup{LR}}}\leq \left(\frac{9}{\epsilon}\right)^{(2^{n+2} + 2 )r^{\textup{LR}}}.
    \end{eqnarray}
    Note that the covering number in~\eqref{Covering number of low-rank sets} is established for the set of low-rank density matrices. Nevertheless, the same bound also applies to the set of density matrices induced by the corresponding factor set $ \ol\setF_{\textup{LR}, r^{\textup{LR}}} = \{ \mF\in\C^{d^n\times r^{\textup{LR}}}:  \ \|\mF\|_F = 1   \}$ since   factors $\mF_1,\mF_2\in \ol\setF_{\textup{LR},r^{\textup{LR}}}$ induces a low-rank matrix $ \frac{\mF_1\mF_1^\dagger- \mF_2\mF_2^\dagger}{\|\mF_1\mF_1^\dagger- \mF_2\mF_2^\dagger\|_F}\in\ol\setX_{\textup{LR}, 2r^{\textup{LR}}}$.

\item \textbf{Low-rank and sparse states} We first define the auxiliary factor set $\ol\setF_{\textup{LR-S},r^{\textup{LR}},t}= \{ \mF\in\C^{d^n\times r^{\textup{LR}}}:  \ \|\mF\|_{0} = t, \|\mF\|_F = 1   \}$ and the corresponding induced density matrix set $\ol\setX_{\textup{LR-S},r^{\textup{LR}},t^2} = \{\vrho\in\C^{d^n\times d^n}:  \ \trace(\vrho) = 0, \|\vrho\|_F =  1,   \rank(\vrho) = r^{\textup{LR}},  \|\vrho\|_0 = t^2 \}$. Since, for any $\mF_1, \mF_2 \in \ol\setF_{\textup{LR-S},r^{\textup{LR}},t}$,  $\frac{\mF_1\mF_1^\dagger- \mF_2\mF_2^\dagger}{\|\mF_1\mF_1^\dagger- \mF_2\mF_2^\dagger\|_F}\in \ol\setX_{\textup{LR-S},2r^{\textup{LR}},2t^2}$ has at most $2t^2$ nonzero entries, its covering number satisfies
    \begin{eqnarray}
    \label{Covering number of sparse and low-rank sets}
    N_{\textup{LR-S},2r^{\textup{LR}},2t^2}\leq \left(\frac{Cd^{2n}}{t^2\epsilon}\right)^{4t^2},
    \end{eqnarray}
    where $C$ is a positive constant. Although $\ol\setX_{\textup{LR-S},r^{\textup{LR}},t^2}$ also possesses a low-rank structure, we characterize its covering number using the sparsity, since the corresponding degrees of freedom $d^n r^{\textup{LR}}$ are typically much larger than $t^2$.

\item \textbf{Matrix product states (MPSs)}  We define the auxiliary set $\ol\setF_{\textup{MPS},r^{\textup{MPS}}}  = \{ \vf\in\C^{d^n\times 1}:\  \|\vf\|_2 = 1, \vf(i_1 \cdots i_\nqbit) = \mX_1^{i_1}  \cdots \mX_\nqbit^{i_\nqbit}, \mX_\ell^{i_\ell}\in\C^{r_{\ell-1}^{\textup{MPS}}\times r_\ell^{\textup{MPS}}}, \ell\in[n], r_0^{\textup{MPS}}=r_n^{\textup{MPS}}=1  \}$. Any MPS with $\|\vf\|_2 = 1$ admits an equivalent representation satisfying \cite{oseledets2011tensor,holtz2012manifolds}
    \begin{eqnarray}
    \label{Equivalent format of MPS}
    L(\mX_\ell)^\dagger L(\mX_\ell) = \mId_{r_\ell^{\textup{MPS}}}, \ell\in[n-1], \ \textup{and} \ \|L(\mX_n)\|_F = 1,
    \end{eqnarray}
    where $L(\mX_\ell)=\begin{bmatrix}\mX_{\ell}^{1} \\ \vdots\\  \mX_{\ell}^{d} \end{bmatrix}\in\C^{r_{\ell-1}^{\textup{MPS}}d \times r_\ell^{\textup{MPS}} }$. Following \cite{zhang2018tensor}, for each $\ell\in[n-1]$, one can construct an $\epsilon$-net $\{L(\mX_\ell^{(1)}),\ldots,L(\mX_\ell^{(N_\ell^{\textup{MPS},r^{\textup{MPS}}})})\}$ such that $\sup_{L(\mX_\ell): \|L(\mX_\ell)\|\leq 1}~\min_{p_\ell\in [N_\ell^{\textup{MPS},r^{\textup{MPS}}}]} \|L(\mX_\ell)-L(\mX_\ell^{(p_\ell)})\|\leq \epsilon$, with covering number
   \begin{eqnarray}
    \label{Covering number of MPS factor 1}
    N_\ell^{\textup{MPS},r^{\textup{MPS}}}\leq \left(\frac{4+\epsilon}{\epsilon}\right)^{dr_{\ell-1}^{\textup{MPS}}r_\ell^{\textup{MPS}}}.
   \end{eqnarray}
   Also, we can construct an $\epsilon$-net $\{ L(\mX_n^{(1)}), \dots, L(\mX_n^{(N_n^{\textup{MPS},r^{\textup{MPS}}})}) \}$  such that $\sup_{L(\mX_n): \|L(\mX_n)\|_F\leq 1}\min_{p_n\in[ N_n^{\textup{MPS},r^{\textup{MPS}}}]} \|L(\mX_n)-L(\mX_n^{(p_n)})\|_F\leq \epsilon$, with covering number
   \begin{eqnarray}
    \label{Covering number of MPS factor n}
    N_n^{\textup{MPS},r^{\textup{MPS}}}\leq \left(\frac{2+\epsilon}{\epsilon}\right)^{dr_{n-1}^{\textup{MPS}} }.
   \end{eqnarray}
   Therefore, by the monotonicity of covering numbers under set inclusion together with the product structure of the parameter space, the overall covering number satisfies
   \begin{eqnarray}
    \label{Covering number of MPS total}
    N_{\textup{MPS},r^{\textup{MPS}}} \leq \Pi_{\ell=1}^{n}N_\ell^{\textup{MPS},r^{\textup{MPS}}} \leq \left(\frac{4+\epsilon}{\epsilon}\right)^{\sum_{\ell=1}^{n}dr_{\ell-1}^{\textup{MPS}}r_\ell^{\textup{MPS}}}.
   \end{eqnarray}

\item \textbf{Matrix product density operators (MPOs)} We define the auxiliary set $\ol\setX_{\textup{MPO},r^{\textup{MPO}}} =  \{ \vrho\in\C^{d^n\times d^n}:  \ \trace(\vrho) =0, \|\vrho\|_F= 1,  \vrho(i_1 \cdots i_\nqbit, j_1\cdots j_\nqbit) = \mX_1^{i_1,j_1}  \cdots \mX_\nqbit^{i_\nqbit,j_\nqbit}, \mX_\ell^{i_\ell,j_\ell}\in\C^{r_{\ell-1}^{\textup{MPO}}\times r_\ell^{\textup{MPO}}}, \ell\in[n], r_0^{\textup{MPO}}=r_n^{\textup{MPO}}=1   \}$. As in the MPS case, any MPO with $\|\vrho\|_F= 1$ admits an equivalent representation satisfying
    \begin{eqnarray}
    \label{Equivalent format of MPO}
    L(\mX_\ell)^\dagger L(\mX_\ell) = \mId_{r_\ell^{\textup{MPO}}}, \ell\in[n-1], \ \textup{and} \ \|L(\mX_n)\|_F = 1,
    \end{eqnarray}
    where $L(\mX_\ell)=\begin{bmatrix}\mX_{\ell}^{1,1} \\ \vdots\\  \mX_{\ell}^{d,d} \end{bmatrix}\in\C^{r_{\ell-1}^{\textup{MPO}}d^2 \times r_\ell^{\textup{MPO}} }$. By the same argument as for MPS, we obtain the following covering number:
   \begin{eqnarray}
    \label{Covering number of MPO total}
    N_{\textup{MPO},r^{\textup{MPO}}} \leq  \left(\frac{4+\epsilon}{\epsilon}\right)^{\sum_{\ell=1}^{n}d^2r_{\ell-1}^{\textup{MPO}}r_\ell^{\textup{MPO}}}.
   \end{eqnarray}

\item \textbf{Low-rank MPOs (LR-MPOs)} We define the corresponding auxiliary set $\ol\setF_{\textup{LR-MPO}, r^{\textup{LR}}, r^{\textup{LR-MPO}}} = \{ \mF\in\C^{d^n\times r^{\textup{LR}}}:\ \|\mF\|_F = 1, \mF(i_1 \cdots i_\nqbit,j) = \mX_1^{i_1} \cdots \mX_{B}^{i_{B},j} \cdots \mX_\nqbit^{i_\nqbit}, j\in[r^{\textup{LR}}],   \mX_\ell^{i_\ell}\in\C^{r_{\ell-1}^{\textup{LR-MPO}}\times r_\ell^{\textup{LR-MPO}}}, \ell\in[n]\setminus\{B\} \  \textup{and} \  \mX_B^{i_B,j}\in\C^{r_{B-1}^{\textup{LR-MPO}}\times r_B^{\textup{LR-MPO}}}, \ r_0^{\textup{LR-MPO}}=r_n^{\textup{LR-MPO}}=1 \}$. Observe that every matrix $\frac{\mF_1\mF_1^\dagger- \mF_2\mF_2^\dagger}{\|\mF_1\mF_1^\dagger- \mF_2\mF_2^\dagger\|_F}$, $\mF_1, \mF_2\in\ol\setF_{\textup{LR-MPO}, r^{\textup{LR}}, r^{\textup{LR-MPO}}}$ belongs to the set $\ol\setX_{\textup{LR-MPO}, 2r^{\textup{LR}}, 2r^{\textup{LR-MPO}}} = \{ \vrho\in\C^{d^n\times d^n}:  \ \trace(\vrho) = 0, \|\vrho\|_F = 1, \rank(\vrho) = 2r^{\textup{LR}}, \vrho(i_1 \cdots i_\nqbit, j_1\cdots j_\nqbit) = \mX_1^{i_1,j_1}  \cdots \mX_\nqbit^{i_\nqbit,j_\nqbit}, \mX_\ell^{i_\ell,j_\ell}\in\C^{2(r_{\ell-1}^{\textup{LR-MPO}})^2\times 2(r_\ell^{\textup{LR-MPO}})^2}, \ell\in[n]    \}$. Therefore, applying the covering number bound for MPO yields
   \begin{eqnarray}
    \label{Covering number of LR-MPO total}
    N_{\textup{LR-MPO}, 2r^{\textup{LR}}, 2r^{\textup{LR-MPO}}} \leq  \left(\frac{4+\epsilon}{\epsilon}\right)^{\sum_{\ell=1}^{n}4d^2(r_{\ell-1}^{\textup{LR-MPO}})^2 (r_\ell^{\textup{LR-MPO}})^2}.
   \end{eqnarray}

\item \textbf{Projected entangled pair states (PEPSs)} For notational simplicity, let $\vf = [\mX_{\ul{1}\,\ul{1}}, \dots, \mX_{\ul{q}\,\ul{p}} ]$ denote the collection of all PEPSs. Unlike MPS/MPO representations, PEPSs generally do not admit a canonical form due to the presence of loops in the underlying tensor network. Consequently, bounds on the local tensor norms alone do not imply a corresponding bound on the norm of the represented state, and vice versa. To ensure both local parameter regularity and global boundedness of the represented state, we introduce the following auxiliary parameter set $\ol\setF_{\textup{PEPS},r^{\textup{PEPS}}} = \{ \vf\in\C^{d^n\times 1}: \vf = [\mX_{\ul{1}\,\ul{1}}, \dots, \mX_{\ul{q}\,\ul{p}} ], \|\mX_{\ul{a}\,\ul{b}} \|_F\leq L_{\ul{a}\,\ul{b}}, \mX_{\ul{a}\,\ul{b}}\in\C^{d\times r_{\ul{a}\,\ul{b-1},\,\ul{a}\,\ul{b}}^{\textup{PEPS}}\times r_{\ul{a-1}\,\ul{b},\,\ul{a}\,\ul{b}}^{\textup{PEPS}} \times r_{\ul{a}\,\ul{b},\,\ul{a}\,\ul{b+1}}^{\textup{PEPS}}\times r_{\ul{a}\,\ul{b},\,\ul{a+1}\,\ul{b}}^{\textup{PEPS}}}, \ul{a}\in[q],  \ul{b}\in[p]  \} \cap \{\vf\in\C^{d^n\times 1}: \|\vf\|_2\leq L  \}$. For each local tensor $\mX_{\ul{a}\,\ul{b}}$, there exists an $\epsilon_{\ul{a}\,\ul{b}}$-net $\{\mX_{\ul{a}\,\ul{b}}^{(1)}, \dots, \mX_{\ul{a}\,\ul{b}}^{(N_{\ul{a}\,\ul{b}}^{\textup{PEPS},r^{\textup{PEPS}}})}  \}$ such that $\sup_{\mX_{\ul{a}\,\ul{b}}: \|\mX_{\ul{a}\,\ul{b}} \|_F\leq L_{\ul{a}\,\ul{b}}}\min_{\ell_{\ul{a}\,\ul{b}}\in[ N_{\ul{a}\,\ul{b}}^{\textup{PEPS},r^{\textup{PEPS}}}]} \|\mX_{\ul{a}\,\ul{b}}-\mX_{\ul{a}\,\ul{b}}^{(\ell_{\ul{a}\,\ul{b}})}\|_F\leq L_{\ul{a}\,\ul{b}} \epsilon_{\ul{a}\,\ul{b}}$  whose covering number satisfies
    \begin{eqnarray}
    \label{Covering number of PEPS factor  any}
    N_{\ul{a}\,\ul{b}}^{\textup{PEPS},r^{\textup{PEPS}}} \leq \left(\frac{2+\epsilon_{\ul{a}\,\ul{b}}}{\epsilon_{\ul{a}\,\ul{b}}}\right)^{dr_{\ul{a}\,\ul{b-1},\,\ul{a}\,\ul{b}}^{\textup{PEPS}} r_{\ul{a-1}\,\ul{b},\,\ul{a}\,\ul{b}}^{\textup{PEPS}} r_{\ul{a}\,\ul{b},\,\ul{a}\,\ul{b+1}}^{\textup{PEPS}} r_{\ul{a}\,\ul{b},\,\ul{a+1}\,\ul{b}}^{\textup{PEPS}}}.
   \end{eqnarray}
   Likewise, there exists an $\epsilon$-net $\{ \vf^{(1)}, \dots, \vf^{(N_{\vf}^{\textup{PEPS}})}  \}$ such that $\sup_{\vf: \|\vf \|_2\leq 1}\min_{\ell\in[ N_{\vf}^{\textup{PEPS}}]} \|\vf-\vf^{(\ell)}\|_2\leq \epsilon $, with covering number
    \begin{eqnarray}
    \label{Covering number of PEPS entire vector}
    N_{\vf}^{\textup{PEPS}} \leq \left(\frac{2L+\epsilon}{\epsilon}\right)^{d^n}.
    \end{eqnarray}
    Both covering constructions provide valid coverings of the auxiliary parameter set $\ol\setF_{\textup{PEPS},r^{\textup{PEPS}}}$. The first is induced by the product structure of the local tensors, while the second is obtained by viewing $\vf$ as a vector in the ambient space. Therefore, the covering number of $\ol\setF_{\textup{PEPS},r^{\textup{PEPS}}}$ is bounded by the smaller of the two upper bounds, namely,
   \begin{eqnarray}
    \label{Covering number of PEPS total}
    N_{\textup{PEPS},r^{\textup{PEPS}}} &\!\!\!\!\leq\!\!\!\!& \min\{ \Pi_{\ul{a}=1}^{q}\Pi_{\ul{b}=1}^{p} N_{\ul{a}\,\ul{b}}^{\textup{PEPS},r^{\textup{PEPS}}},  N_{\vf}^{\textup{PEPS}} \}\nonumber\\
     &\!\!\!\!=\!\!\!\!& \Pi_{\ul{a}=1}^{q}\Pi_{\ul{b}=1}^{p}\left(\frac{2+\epsilon_{\ul{a}\,\ul{b}}}{\epsilon_{\ul{a}\,\ul{b}}}\right)^{dr_{\ul{a}\,\ul{b-1},\,\ul{a}\,\ul{b}}^{\textup{PEPS}} r_{\ul{a-1}\,\ul{b},\,\ul{a}\,\ul{b}}^{\textup{PEPS}} r_{\ul{a}\,\ul{b},\,\ul{a}\,\ul{b+1}}^{\textup{PEPS}} r_{\ul{a}\,\ul{b},\,\ul{a+1}\,\ul{b}}^{\textup{PEPS}}}.
   \end{eqnarray}
   In typical PEPS parameterizations, the product-structure bound is tighter than the ambient-space bound, and we therefore adopt the former in the following analysis.

\item \textbf{Projected entangled pair operators (PEPOs)} Similar to the PEPS case, for notational simplicity, we represent a PEPO by $\vrho = [\mX_{\ul{1}\,\ul{1}}, \dots, \mX_{\ul{q}\,\ul{p}} ]$ where the entries denote the local PEPO tensors. We then define the auxiliary parameter set  $\ol\setX_{\textup{PEPO},r^{\textup{PEPO}}} = \{ \vrho\in\C^{d^n\times d^n}: \vrho = [\mX_{\ul{1}\,\ul{1}}, \dots, \mX_{\ul{q}\,\ul{p}} ], \|\mX_{\ul{a}\,\ul{b}} \|_F\leq L_{\ul{a}\,\ul{b}}, \mX_{\ul{a}\,\ul{b}}\in\C^{d\times d\times r_{\ul{a}\,\ul{b-1},\,\ul{a}\,\ul{b}}^{\textup{PEPO}}\times r_{\ul{a-1}\,\ul{b},\,\ul{a}\,\ul{b}}^{\textup{PEPO}} \times r_{\ul{a}\,\ul{b},\,\ul{a}\,\ul{b+1}}^{\textup{PEPO}}\times r_{\ul{a}\,\ul{b},\,\ul{a+1}\,\ul{b}}^{\textup{PEPO}}}, \ul{a}\in[q],  \ul{b}\in[p]   \} \cap \{\vrho\in\C^{d^n\times d^n}: \|\vrho\|_F\leq L  \}$. By an argument entirely analogous to that for PEPS, the covering number of $\ol\setX_{\textup{PEPO}}$ satisfies
   \begin{eqnarray}
    \label{Covering number of PEPO total}
    N_{\textup{PEPO},r^{\textup{PEPO}}} \leq \Pi_{\ul{a}=1}^{q}\Pi_{\ul{b}=1}^{p}\left(\frac{2+\epsilon_{\ul{a}\,\ul{b}}}{\epsilon_{\ul{a}\,\ul{b}}}\right)^{d^2r_{\ul{a}\,\ul{b-1},\,\ul{a}\,\ul{b}}^{\textup{PEPO}} r_{\ul{a-1}\,\ul{b},\,\ul{a}\,\ul{b}}^{\textup{PEPO}} r_{\ul{a}\,\ul{b},\,\ul{a}\,\ul{b+1}}^{\textup{PEPO}} r_{\ul{a}\,\ul{b},\,\ul{a+1}\,\ul{b}}^{\textup{PEPO}}}.
   \end{eqnarray}

\item \textbf{Low-rank PEPOs (LR-PEPOs)} Similar to the LR-MPO construction, for every $\mF_1, \mF_2\in\setF_{\textup{LR-PEPO}}$, we define $ \vrho = \mF_1\mF_1^\dagger - \mF_2\mF_2^\dagger$. Then $\vrho$ admits a PEPO representation whose bond dimensions are $[2(r_{\ul{1}\,\ul{1},\,\ul{1}\,\ul{2}}^{\textup{LR-PEPO}})^2,\dots, 2(r_{\ul{q}\,\ul{p-1},\,\ul{q}\,\ul{p}}^{\textup{LR-PEPO}})^2 ]$. Applying the covering number bound for PEPOs therefore yields
   \begin{eqnarray}
    \label{Covering number of LR-PEPO total}
    N_{\textup{LR-PEPO}, 2r^{\textup{LR}}, 2r^{\textup{LR-PEPO}}} \leq \Pi_{\ul{a}=1}^{q}\Pi_{\ul{b}=1}^{p}\left(\frac{2+\epsilon_{\ul{a}\,\ul{b}}}{\epsilon_{\ul{a}\,\ul{b}}}\right)^{16d^2 (r_{\ul{a}\,\ul{b-1},\,\ul{a}\,\ul{b}}^{\textup{LR-PEPO}})^2 (r_{\ul{a-1}\,\ul{b},\,\ul{a}\,\ul{b}}^{\textup{LR-PEPO}})^2 (r_{\ul{a}\,\ul{b},\,\ul{a}\,\ul{b+1}}^{\textup{LR-PEPO}})^2 (r_{\ul{a}\,\ul{b},\,\ul{a+1}\,\ul{b}}^{\textup{LR-PEPO}})^2}.
   \end{eqnarray}
   Here, $\epsilon_{\ul{a}\,\ul{b}}$ is defined in the same way as in \eqref{Covering number of PEPO total}.

\end{itemize}
For the subsequent analysis, we choose $\epsilon=\frac{1}{2}$ for physical states, $\epsilon=\frac{1}{4}$ for sparse, low-rank, and low-rank and sparse states, $\epsilon=\frac{1}{4n}$ for MPSs, and $\epsilon=\frac{1}{2n}$ for MPOs and LR-MPOs. Likewise, we choose $\epsilon_{\ul{a},\ul{b}}=\frac{1}{4n}$ for PEPSs and $\epsilon_{\ul{a},\ul{b}}=\frac{1}{2n}$ for PEPOs and LR-PEPOs. This completes the proof.

\end{proof}

\section{Proof of \Cref{label:sample complexity of unbiased solution}}
\label{sec: proof of sample complexity of unbisaed solution}

Since every factor $\mF\in\setF$ induces a quantum state through $\vrho=\mF\mF^\dagger $, we equivalently consider the induced state class $ \setX = \{\vrho\in\mathbb{C}^{d^n\times d^n}: \vrho=\mF\mF^\dagger,\; \mF\in\setF\}$. In the following, we work directly with the induced density operator $\vrho\in\setX$. Since $\wh\vrho$ is a global minimizer of \eqref{minimizer loss function of unbiased solution}, it follows that
\begin{eqnarray}
\label{necessary condition of global solution}
f(\wh\vrho) \leq f(\vrho^\star).
\end{eqnarray}
By expanding \eqref{necessary condition of global solution} and using
$\wh p_{q,k}=p_{q,k}+\eta_{q,k}$, we obtain
\begin{eqnarray}
\label{necessary condition of global solution1}
&\!\!\!\!\!\!\!\!&\frac{1}{2QK}\sum_{q=1}^{Q}\sum_{k=1}^{K}\bigg\<(1-a) \mA_{q,k} + \frac{a}{K}\mId,   (1-\lambda)(\wh\vrho - \vrho^\star)   \bigg\>^2 \nonumber\\
&\!\!\!\!\leq\!\!\!\!& \frac{1}{QK} \sum_{q=1}^{Q}\sum_{k=1}^{K}\eta_{q,k}\bigg\<(1-a) \mA_{q,k} + \frac{a}{K}\mId,   (1-\lambda)(\wh\vrho - \vrho^\star)   \bigg\>,
\end{eqnarray}
where $\eta_{q,k} = \wh p_{q,k} - p_{q,k} = \frac{f_{q,k}}{M} - p_{q,k}$ denotes the finite-shot statistical error associated with the empirical probability.

We first derive a lower bound for the left-hand side of \eqref{necessary condition of global solution1}. Expanding the square yields
\begin{eqnarray}
\label{lower bound of the left term}
&\!\!\!\!\!\!\!\!&\sum_{q=1}^{Q}\sum_{k=1}^{K}\bigg\<(1-a) \mA_{q,k} + \frac{a}{K}\mId,   (1-\lambda)(\wh\vrho - \vrho^\star)   \bigg\>^2\nonumber\\
&\!\!\!\! = \!\!\!\!& \sum_{q=1}^{Q}\sum_{k=1}^{K}(1-a)^2(1-\lambda)^2\<\mA_{q,k}, \wh\vrho - \vrho^\star \>^2 + \sum_{q=1}^{Q}\sum_{k=1}^{K}\frac{a^2(1-\lambda)^2}{K^2}\< \mId, \wh\vrho - \vrho^\star \>^2\nonumber\\
&\!\!\!\!\!\!\!\!& + \sum_{q=1}^{Q}\sum_{k=1}^{K} \frac{2a(1-a)(1-\lambda)^2}{K}\< \mA_{q,k}, \wh\vrho - \vrho^\star \>\< \mId, \wh\vrho - \vrho^\star  \>\nonumber\\
&\!\!\!\! = \!\!\!\!& \sum_{q=1}^{Q}\sum_{k=1}^{K}(1-a)^2(1-\lambda)^2\<\mA_{q,k}, \wh\vrho - \vrho^\star \>^2\nonumber\\
&\!\!\!\! \geq  \!\!\!\!& \alpha_1(Q,K) (1-a)^2(1-\lambda)^2 \|\wh\vrho - \vrho^\star\|_F^2,
\end{eqnarray}
where the second equality follows from $\< \mId, \wh\vrho - \vrho^\star  \> = \trace(\wh\vrho - \vrho^\star ) = 0$ and the last inequality follows from \eqref{lower constant of second order information}.

We next analyze the right-hand side of \eqref{necessary condition of global solution1}. By direct expansion,
\begin{eqnarray}
\label{upper bound of the right term}
\sum_{q=1}^{Q}\sum_{k=1}^{K}\eta_{q,k}\bigg\<(1-a) \mA_{q,k} + \frac{a}{K}\mId,   (1-\lambda)(\wh\vrho - \vrho^\star)   \bigg\>  =   (1-a)(1-\lambda)\sum_{q=1}^{Q}\sum_{k=1}^{K}\eta_{q,k} \< \mA_{q,k}, \wh\vrho - \vrho^\star   \>,
\end{eqnarray}
where the equality follows from $\< \mId, \wh\vrho - \vrho^\star  \> = \trace(\wh\vrho - \vrho^\star ) = 0$. The subsequent analysis is carried out separately for the following structured quantum-state classes: physical/low-rank/sparse states, MPOs, and PEPOs.
\begin{itemize}

\item \textbf{Physical states} Let  $ \{\vrho^{(1)}, \dots, \vrho^{(N_{\textup{PHY}})}\} \subset\ol\setX_{\textup{PHY}}$ be the $\epsilon$-net constructed in Appendix~\ref{Proof of ocvering number all}, and fix an arbitrary covering point $\vrho^{(p)}\in \{\vrho^{(1)}, \dots, \vrho^{(N_{\textup{PHY}})}\}$. Then,
    \begin{eqnarray}
    \label{upper bound of the right term physical}
    &\!\!\!\! \!\!\!\!&\sum_{q=1}^{Q}\sum_{k=1}^{K}\eta_{q,k} \< \mA_{q,k}, \wh\vrho - \vrho^\star   \>\nonumber\\
    &\!\!\!\! \leq \!\!\!\!& \|\wh\vrho - \vrho^\star\|_F\max_{\vrho\in\ol\setX_{\textup{PHY}}}\sum_{q=1}^{Q}\sum_{k=1}^{K}\eta_{q,k} \< \mA_{q,k}, \vrho  \>\nonumber\\
    &\!\!\!\! = \!\!\!\!& \|\wh\vrho - \vrho^\star\|_F\max_{\vrho\in\ol\setX_{\textup{PHY}}}\sum_{q=1}^{Q}\sum_{k=1}^{K}\eta_{q,k} \< \mA_{q,k}, \vrho - \vrho^{(p)}  \> + \|\wh\vrho - \vrho^\star\|_F\sum_{q=1}^{Q}\sum_{k=1}^{K}\eta_{q,k} \< \mA_{q,k}, \vrho^{(p)}  \>\nonumber\\
    &\!\!\!\! \leq \!\!\!\!& \|\wh\vrho - \vrho^\star\|_F \|\vrho - \vrho^{(p)}\|_F \max_{\vrho\in\ol\setX_{\textup{PHY}}}\sum_{q=1}^{Q}\sum_{k=1}^{K}\eta_{q,k} \< \mA_{q,k}, \frac{\vrho - \vrho^{(p)}}{\|\vrho - \vrho^{(p)}\|_F}   \> + \|\wh\vrho - \vrho^\star\|_F\sum_{q=1}^{Q}\sum_{k=1}^{K}\eta_{q,k} \< \mA_{q,k}, \vrho^{(p)}  \>\nonumber\\
    &\!\!\!\! \leq \!\!\!\!& \|\wh\vrho - \vrho^\star\|_F \epsilon \max_{\vrho\in\ol\setX_{\textup{PHY}}}\sum_{q=1}^{Q}\sum_{k=1}^{K}\eta_{q,k} \< \mA_{q,k}, \vrho   \> + \|\wh\vrho - \vrho^\star\|_F\sum_{q=1}^{Q}\sum_{k=1}^{K}\eta_{q,k} \< \mA_{q,k}, \vrho^{(p)}  \>.
    \end{eqnarray}
    Setting $\epsilon=\frac12$ yields
    \begin{eqnarray}
    \label{upper bound of the right term physical1}
    \sum_{q=1}^{Q}\sum_{k=1}^{K}\eta_{q,k} \< \mA_{q,k}, \wh\vrho - \vrho^\star   \> &\!\!\!\! \leq \!\!\!\!& \|\wh\vrho - \vrho^\star\|_F\max_{\vrho\in\ol\setX_{\textup{PHY}}}\sum_{q=1}^{Q}\sum_{k=1}^{K}\eta_{q,k} \< \mA_{q,k}, \vrho  \>\nonumber\\
    &\!\!\!\! \leq \!\!\!\!& 2\|\wh\vrho - \vrho^\star\|_F\sum_{q=1}^{Q}\sum_{k=1}^{K}\eta_{q,k} \< \mA_{q,k}, \vrho^{(p)}  \>.
    \end{eqnarray}

\item \textbf{Sparse states} Let  $ \{\vrho^{(1)}, \dots, \vrho^{(N_{\textup{S},2s})}\} \subset\ol\setX_{\textup{S},2s}$ be the $\epsilon$-net constructed in Appendix~\ref{Proof of ocvering number all}, and fix an arbitrary covering point $\vrho^{(p)}\in \{\vrho^{(1)}, \dots, \vrho^{(N_{\textup{S},2s})}\}$. Then,
    \begin{eqnarray}
    \label{upper bound of the right term sparse}
    &\!\!\!\! \!\!\!\!&\sum_{q=1}^{Q}\sum_{k=1}^{K}\eta_{q,k} \< \mA_{q,k}, \wh\vrho - \vrho^\star   \>  \leq   \|\wh\vrho - \vrho^\star\|_F\max_{\vrho\in\ol\setX_{\textup{S},2s}}\sum_{q=1}^{Q}\sum_{k=1}^{K}\eta_{q,k} \< \mA_{q,k}, \vrho  \>\nonumber\\
    &\!\!\!\! = \!\!\!\!& \|\wh\vrho - \vrho^\star\|_F\max_{\vrho\in\ol\setX_{\textup{S},2s}}\sum_{q=1}^{Q}\sum_{k=1}^{K}\eta_{q,k} \< \mA_{q,k}, \vrho - \vrho^{(p)}  \> + \|\wh\vrho - \vrho^\star\|_F\sum_{q=1}^{Q}\sum_{k=1}^{K}\eta_{q,k} \< \mA_{q,k}, \vrho^{(p)}  \>\nonumber\\
    &\!\!\!\! \leq \!\!\!\!& \|\wh\vrho - \vrho^\star\|_F\max_{\vrho_1,\vrho_2\in\ol\setX_{\textup{S},2s}}\sum_{q=1}^{Q}\sum_{k=1}^{K}\eta_{q,k} (\< \mA_{q,k}, \vrho_1 - \vrho_1^{(p)}  \> + \< \mA_{q,k}, \vrho_2 - \vrho_2^{(p)}  \> )  + \|\wh\vrho - \vrho^\star\|_F\sum_{q=1}^{Q}\sum_{k=1}^{K}\eta_{q,k} \< \mA_{q,k}, \vrho^{(p)}  \>\nonumber\\
    &\!\!\!\! \leq \!\!\!\!& \|\wh\vrho - \vrho^\star\|_F\max_{\vrho_1,\vrho_2\in\ol\setX_{\textup{S},2s}}\sum_{q=1}^{Q}\sum_{k=1}^{K}\eta_{q,k} (\< \mA_{q,k}, \vrho_1 - \vrho_1^{(p)}  \> + \< \mA_{q,k}, \vrho_2 - \vrho_2^{(p)}  \> )\nonumber\\
    &\!\!\!\!  \!\!\!\!& + \|\wh\vrho - \vrho^\star\|_F\sum_{q=1}^{Q}\sum_{k=1}^{K}\eta_{q,k} \< \mA_{q,k}, \vrho^{(p)}  \>\nonumber\\
    &\!\!\!\! \leq \!\!\!\!& \|\wh\vrho - \vrho^\star\|_F\max_{\vrho_1,\vrho_2\in\ol\setX_{\textup{S},2s}}\sum_{q=1}^{Q}\sum_{k=1}^{K}\eta_{q,k} \bigg(\|\vrho_1 - \vrho_1^{(p)}\|_F\bigg\< \mA_{q,k}, \frac{\vrho_1 - \vrho_1^{(p)}}{\|\vrho_1 - \vrho_1^{(p)}\|_F}  \bigg\>\nonumber\\
    &\!\!\!\!  \!\!\!\!& + \|\vrho_2 - \vrho_2^{(p)}\|_F\bigg\< \mA_{q,k},\frac{\vrho_2 - \vrho_2^{(p)}}{\|\vrho_2 - \vrho_2^{(p)}\|_F}  \bigg\> \bigg)
    + \|\wh\vrho - \vrho^\star\|_F\sum_{q=1}^{Q}\sum_{k=1}^{K}\eta_{q,k} \< \mA_{q,k}, \vrho^{(p)}  \>\nonumber\\
    &\!\!\!\! \leq \!\!\!\!&  \|\wh\vrho - \vrho^\star\|_F 2\epsilon\max_{\vrho\in\ol\setX_{\textup{S},2s}}\sum_{q=1}^{Q}\sum_{k=1}^{K}\eta_{q,k} \< \mA_{q,k}, \vrho   \> + \|\wh\vrho - \vrho^\star\|_F\sum_{q=1}^{Q}\sum_{k=1}^{K}\eta_{q,k} \< \mA_{q,k}, \vrho^{(p)}  \>.
    \end{eqnarray}
    Setting $\epsilon=\frac{1}{4}$ yields
    \begin{eqnarray}
    \label{upper bound of the right term sparse1}
    \sum_{q=1}^{Q}\sum_{k=1}^{K}\eta_{q,k} \< \mA_{q,k}, \wh\vrho - \vrho^\star   \> &\!\!\!\! \leq \!\!\!\!& \|\wh\vrho - \vrho^\star\|_F\max_{\vrho\in\ol\setX_{\textup{S},2s}}\sum_{q=1}^{Q}\sum_{k=1}^{K}\eta_{q,k} \< \mA_{q,k}, \vrho  \>\nonumber\\
    &\!\!\!\! \leq \!\!\!\!& 2\|\wh\vrho - \vrho^\star\|_F\sum_{q=1}^{Q}\sum_{k=1}^{K}\eta_{q,k} \< \mA_{q,k}, \vrho^{(p)}  \>.
    \end{eqnarray}

\item \textbf{Low-rank states} Let  $ \{\vrho^{(1)}, \dots, \vrho^{(N_{\textup{LR},2r^{\textup{LR}}})}\} \subset\ol\setX_{\textup{LR},2r^{\textup{LR}}}$ be the $\epsilon$-net constructed in Appendix~\ref{Proof of ocvering number all}, and fix an arbitrary covering point $\vrho^{(p)}\in \{\vrho^{(1)}, \dots, \vrho^{(N_{\textup{LR},2r^{\textup{LR}}})}\}$. Applying the same argument as in the derivation of \eqref{upper bound of the right term sparse1} with $\epsilon=\frac{1}{4}$ yields
    \begin{eqnarray}
    \label{upper bound of the right term low-rank}
    \sum_{q=1}^{Q}\sum_{k=1}^{K}\eta_{q,k} \< \mA_{q,k}, \wh\vrho - \vrho^\star   \>  \leq   2\|\wh\vrho - \vrho^\star\|_F\sum_{q=1}^{Q}\sum_{k=1}^{K}\eta_{q,k} \< \mA_{q,k}, \vrho^{(p)}  \>.
    \end{eqnarray}

\item \textbf{MPOs} Let  $ \{\vrho^{(1)}, \dots, \vrho^{(N_{\textup{MPO},2r^{\textup{MPO}}})}\} \subset\ol\setX_{\textup{MPO},2r^{\textup{MPO}}}$ be the $\epsilon$-net constructed in Appendix~\ref{Proof of ocvering number all}, and fix an arbitrary covering point $\vrho^{(p)}\in \{\vrho^{(1)}, \dots, \vrho^{(N_{\textup{MPO},2r^{\textup{MPO}}})}\}$. To simplify the notation, we write the MPO representation of $\vrho$ compactly as $\vrho = [\mX_1,\dots, \mX_n] $ and then
    \begin{eqnarray}
    \label{upper bound of the right term MPO}
    &\!\!\!\! \!\!\!\!&\sum_{q=1}^{Q}\sum_{k=1}^{K}\eta_{q,k} \< \mA_{q,k}, \wh\vrho - \vrho^\star   \>\nonumber\\
    &\!\!\!\! \leq \!\!\!\!& \|\wh\vrho - \vrho^\star\|_F\max_{\vrho\in\ol\setX_{\textup{MPO},2r^{\textup{MPO}}}}\sum_{q=1}^{Q}\sum_{k=1}^{K}\eta_{q,k} \< \mA_{q,k}, \vrho  \>\nonumber\\
    &\!\!\!\! = \!\!\!\!& \|\wh\vrho - \vrho^\star\|_F\max_{\vrho\in\ol\setX_{\textup{MPO},2r^{\textup{MPO}}}}\sum_{q=1}^{Q}\sum_{k=1}^{K}\eta_{q,k} \< \mA_{q,k}, \vrho - \vrho^{(p)}  \> + \|\wh\vrho - \vrho^\star\|_F\sum_{q=1}^{Q}\sum_{k=1}^{K}\eta_{q,k} \< \mA_{q,k}, \vrho^{(p)}  \>\nonumber\\
    &\!\!\!\! \leq \!\!\!\!& \|\wh\vrho - \vrho^\star\|_F\max_{[\mX_1,\dots, \mX_n]\in\ol\setX_{\textup{MPO},2r^{\textup{MPO}}}}\sum_{q=1}^{Q}\sum_{k=1}^{K}\eta_{q,k} \bigg\< \mA_{q,k}, \sum_{a=1}^n[\mX_1^{(p)},\dots,\! \mX_{a-1}^{(p)}, \mX_{a}^{(p)}\! -\! \mX_{a}^\star, \mX_{a+1}^\star,  \dots, \! \mX_n^\star]  \bigg\> \nonumber\\
    &\!\!\!\!  \!\!\!\!& + \|\wh\vrho - \vrho^\star\|_F\sum_{q=1}^{Q}\sum_{k=1}^{K}\eta_{q,k} \< \mA_{q,k}, \vrho^{(p)}  \>\nonumber\\
    &\!\!\!\! \leq \!\!\!\!& \|\wh\vrho - \vrho^\star\|_F \epsilon\max_{[\mX_1,\dots, \mX_n]\in\ol\setX_{\textup{MPO},2r^{\textup{MPO}}}}\sum_{q=1}^{Q}\sum_{k=1}^{K}\eta_{q,k} \bigg\< \mA_{q,k}, \sum_{a=1}^n[\mX_1^{(p)},\dots,\! \mX_{a-1}^{(p)}, \frac{\mX_{a}^{(p)}\! -\! \mX_{a}^\star}{\|\mX_{a}^{(p)}\! -\! \mX_{a}^\star\|_F}, \mX_{a+1}^\star,  \dots, \! \mX_n^\star]  \bigg\> \nonumber\\
    &\!\!\!\!  \!\!\!\!& + \|\wh\vrho - \vrho^\star\|_F\sum_{q=1}^{Q}\sum_{k=1}^{K}\eta_{q,k} \< \mA_{q,k}, \vrho^{(p)}  \>\nonumber\\
    &\!\!\!\! \leq \!\!\!\!&  \|\wh\vrho - \vrho^\star\|_F n\epsilon \max_{\vrho\in\ol\setX_{\textup{MPO},2r^{\textup{MPO}}}}\sum_{q=1}^{Q}\sum_{k=1}^{K}\eta_{q,k} \< \mA_{q,k}, \vrho    \> + \|\wh\vrho - \vrho^\star\|_F\sum_{q=1}^{Q}\sum_{k=1}^{K}\eta_{q,k} \< \mA_{q,k}, \vrho^{(p)}  \>.
    \end{eqnarray}
    Setting $\epsilon=\frac{1}{2n}$ yields
    \begin{eqnarray}
    \label{upper bound of the right term MPO1}
    \sum_{q=1}^{Q}\sum_{k=1}^{K}\eta_{q,k} \< \mA_{q,k}, \wh\vrho - \vrho^\star   \> &\!\!\!\! \leq \!\!\!\!& \|\wh\vrho - \vrho^\star\|_F\max_{\vrho\in\ol\setX_{\textup{MPO},2r^{\textup{MPO}}}}\sum_{q=1}^{Q}\sum_{k=1}^{K}\eta_{q,k} \< \mA_{q,k}, \vrho  \>\nonumber\\
    &\!\!\!\! \leq \!\!\!\!& 2\|\wh\vrho - \vrho^\star\|_F\sum_{q=1}^{Q}\sum_{k=1}^{K}\eta_{q,k} \< \mA_{q,k}, \vrho^{(p)}  \>.
    \end{eqnarray}

\item \textbf{PEPOs} Let $\{\vrho^{(1)}, \dots, \vrho^{(N_{\textup{PEPO},2r^{\textup{PEPO}}})}\} \subset\ol\setX_{\textup{PEPO},2r^{\textup{PEPO}}}$   be the $\epsilon_{\ul{a}\,\ul{b}}$-net constructed in Appendix~\ref{Proof of ocvering number all} with $L=\|\wh\vrho-\vrho^\star\|_F$, and fix an arbitrary covering point $\vrho^{(p)}\in \{\vrho^{(1)}, \dots, \vrho^{(N_{\textup{PEPO},2r^{\textup{PEPO}}})}\}$. For notational convenience, we represent the PEPO corresponding to $\vrho$ compactly as  $\vrho = [\mX_{\ul{1}\,\ul{1}}, \dots, \mX_{\ul{q}\,\ul{p}} ] $, where $\mX_{\ul{a},\ul{b}}$ denotes the local tensor at site $(\ul{a},\ul{b})$. Applying the same argument as that used to derive the third inequality in \eqref{upper bound of the right term MPO}, we obtain
    \begin{eqnarray}
    \label{upper bound of the right term PEPO}
    &\!\!\!\! \!\!\!\!&\sum_{q=1}^{Q}\sum_{k=1}^{K}\eta_{q,k} \< \mA_{q,k}, \wh\vrho - \vrho^\star   \>\leq  \max_{\vrho\in\ol\setX_{\textup{MPO},2r^{\textup{PEPO}}}}\sum_{q=1}^{Q}\sum_{k=1}^{K}\eta_{q,k} \< \mA_{q,k}, \vrho  \>\nonumber\\
    &\!\!\!\! \leq \!\!\!\!&  \sum_{a=1}^{q}\sum_{b=1}^{p} L_{\ul{a},\ul{b}} \epsilon_{\ul{a},\ul{b}}\max_{[\mX_{\ul{1}\,\ul{1}}, \dots, \mX_{\ul{q}\,\ul{p}} ]\in\ol\setX_{\textup{PEPO},2r^{\textup{PEPO}}}}\sum_{q=1}^{Q}\sum_{k=1}^{K}\eta_{q,k} \bigg\< \mA_{q,k}, [\mX_{\ul{1}\,\ul{1}}^{(p)},\dots,\! \frac{\mX_{\ul{a}\,\ul{b}}^{(p)}\! -\! \mX_{\ul{a}\,\ul{b}}^\star}{\|\mX_{\ul{a}\,\ul{b}}^{(p)}\! -\! \mX_{\ul{a}\,\ul{b}}^\star\|_F},  \dots, \! \mX_{\ul{q}\,\ul{p}}^\star]  \bigg\> \nonumber\\
    &\!\!\!\!  \!\!\!\!& + \sum_{q=1}^{Q}\sum_{k=1}^{K}\eta_{q,k} \< \mA_{q,k}, \vrho^{(p)}  \>\nonumber\\
    &\!\!\!\! \leq \!\!\!\!&   \sum_{a=1}^{q}\sum_{b=1}^{p}  \epsilon_{\ul{a},\ul{b}} \max_{\vrho\in\ol\setX_{\textup{PEPO},2r^{\textup{PEPO}}}} \sum_{q=1}^{Q}\sum_{k=1}^{K}\eta_{q,k}\< \mA_{q,k}, \vrho   \> + \sum_{q=1}^{Q}\sum_{k=1}^{K}\eta_{q,k} \< \mA_{q,k}, \vrho^{(p)}  \>.
    \end{eqnarray}
    Setting $\epsilon_{\ul{a},\ul{b}}=\frac{1}{2n}$ yields
    \begin{eqnarray}
    \label{upper bound of the right term PEPO1}
    \sum_{q=1}^{Q}\sum_{k=1}^{K}\eta_{q,k} \< \mA_{q,k}, \wh\vrho - \vrho^\star   \> \leq 2 \sum_{q=1}^{Q}\sum_{k=1}^{K}\eta_{q,k} \< \mA_{q,k}, \vrho^{(p)}  \>.
    \end{eqnarray}

\end{itemize}

We consider any fixed value of $\vrho^{(p)}$ and apply \cite[Lemma 14]{qin2024quantum} to establish a concentration inequality for the expression $\sum_{q=1}^{Q}\sum_{k=1}^{K}  \veta_{q,k}\<\mA_{q,k}, \vrho^{(p)} \>$. Specifically, we have
\begin{eqnarray}
\label{concentration inequality of fixed quantum state}
\P{\sum_{q=1}^{Q}\sum_{k=1}^{K}  \eta_{q,k}\<\mA_{q,k}, \vrho^{(p)} \> > t}  \leq    2e^{-\frac{Mt^2}{16 \sum_{q = 1}^{Q}\sum_{k = 1}^{K}\<\mA_{q,k}, \vrho^{(p)} \>^2p_{q,k}}}.
\end{eqnarray}
Furthermore,
\begin{eqnarray}
\label{expansion of three order term}
&\!\!\!\!   \!\!\!\!&\sum_{q = 1}^{Q}\sum_{k = 1}^{K}\<\mA_{q,k}, \vrho^{(p)} \>^2p_{q,k} \nonumber\\
&\!\!\!\! = \!\!\!\!& \sum_{q = 1}^{Q}\sum_{k = 1}^{K}\<\mA_{q,k}, \vrho^{(p)} \>^2\bigg\<(1-a) \mA_{q,k} + \frac{a}{K}\mId,(1-\lambda)\vrho^\star + \frac{\lambda}{d^n}\mId \bigg\>\nonumber\\
&\!\!\!\! = \!\!\!\!& (1-a)(1-\lambda)\sum_{q = 1}^{Q}\sum_{k = 1}^{K}\<\mA_{q,k}, \vrho^{(p)} \>^2 \< \mA_{q,k},\vrho^\star  \>  + (1-a)\lambda\sum_{q = 1}^{Q}\sum_{k = 1}^{K}\<\mA_{q,k}, \vrho^{(p)} \>^2 \< \mA_{q,k},  \frac{1}{d^n}\mId  \>\nonumber\\
&\!\!\!\!   \!\!\!\!& + \frac{a(1-\lambda)}{K}\sum_{q = 1}^{Q}\sum_{k = 1}^{K}\<\mA_{q,k}, \vrho^{(p)} \>^2 \< \mId,  \vrho^\star   \>  + \frac{a\lambda}{Kd^n}\sum_{q = 1}^{Q}\sum_{k = 1}^{K}\<\mA_{q,k}, \vrho^{(p)} \>^2 \< \mId,  \mId \>\nonumber\\
&\!\!\!\! \leq \!\!\!\!& (1-a)\beta(Q,K)\|\vrho^{(p)}\|_F^2 + \frac{a\alpha_2(Q,K)}{K}\|\vrho^{(p)}\|_F^2,
\end{eqnarray}
where the four terms in the expansion are bounded separately: the first term by \eqref{upper constant of third order information}; the second term by the same inequality together with $\frac{1}{d^n}\mId\in\setX_{\textup{PHY}}$; and the last two terms by \eqref{upper constant of second order information}, using $\trace(\vrho^\star)=1$ and $\trace(\mId)=d^n$. Consequently, together with \eqref{concentration inequality of fixed quantum state}, and noting that $\|\vrho^{(p)}\|_F^2\le1$ for physical states, low-rank states, sparse states, and MPOs, while $\|\vrho^{(p)}\|_F^2\le\|\wh\vrho-\vrho^\star\|_F^2$ for PEPOs, we obtain
\begin{eqnarray}
\label{concentration inequality of fixed quantum state1}
&\!\!\!\! \!\!\!\!&\P{\sum_{q=1}^{Q}\sum_{k=1}^{K}\eta_{q,k} \< \mA_{q,k}, \wh\vrho - \vrho^\star   \> > t}\nonumber\\
&\!\!\!\!\leq \!\!\!\!& \begin{cases} \P{\sum_{q=1}^{Q}\sum_{k=1}^{K}  \eta_{q,k}\<\mA_{q,k}, \vrho^{(p)} \> > \frac{t}{2\|\wh\vrho-\vrho^\star\|_F}}, & \setX_{\textup{PHY}}, \setX_{\textup{S}}, \setX_{\textup{LR}}, \setF_{\textup{LR}}, \setX_{\textup{MPO}} \\
\P{\sum_{q=1}^{Q}\sum_{k=1}^{K}  \eta_{q,k}\<\mA_{q,k}, \vrho^{(p)} \> > \frac{t}{2}}, & \setX_{\textup{PEPO}}\\
\end{cases}\nonumber\\
&\!\!\!\!\leq \!\!\!\!& \begin{cases}
2N_{\textup{PHY}}e^{-\frac{Mt^2}{64((1-a)\beta(Q,K) + a\alpha_2(Q,K)/K)\|\wh\vrho-\vrho^\star\|_F^2}}, &  \setX_{\textup{PHY}} \\
2N_{\textup{S},2s}e^{-\frac{Mt^2}{64((1-a)\beta(Q,K) + a\alpha_2(Q,K)/K)\|\wh\vrho-\vrho^\star\|_F^2}}, &  \setX_{\textup{S}} \\
2N_{\textup{LR},2r^{\textup{LR}}}e^{-\frac{Mt^2}{64((1-a)\beta(Q,K) + a\alpha_2(Q,K)/K)\|\wh\vrho-\vrho^\star\|_F^2}}, &  \setX_{\textup{LR}}, \setF_{\textup{LR}}  \\
2N_{\textup{MPO},2r^{\textup{MPO}}}e^{-\frac{Mt^2}{64((1-a)\beta(Q,K) + a\alpha_2(Q,K)/K)\|\wh\vrho-\vrho^\star\|_F^2}}, &  \setX_{\textup{MPO}}  \\
2N_{\textup{PEPO},2r^{\textup{PEPO}}}e^{-\frac{Mt^2}{64((1-a)\beta(Q,K) + a\alpha_2(Q,K)/K)\|\wh\vrho-\vrho^\star\|_F^2}}, &  \setX_{\textup{PEPO}}  \\
\end{cases}\nonumber\\
&\!\!\!\!\leq \!\!\!\!& e^{-\frac{CMt^2}{((1-a)\beta(Q,K) + a\alpha_2(Q,K)/K)\|\wh\vrho-\vrho^\star\|_F^2}+\log N_{\textup{type}}},    \setX_{\textup{PHY}}, \setX_{\textup{S}}, \setX_{\textup{LR}}, \setF_{\textup{LR}}, \setX_{\textup{MPO}},\setX_{\textup{PEPO}}.
\end{eqnarray}
Here $N_{\textup{PHY}}$, $N_{\textup{S},2s}$, $N_{\textup{LR},2r^{\textup{LR}}}$, $N_{\textup{MPO},2r^{\textup{MPO}}}$ and $N_{\textup{PEPO},2r^{\textup{PEPO}}}$ denote the corresponding covering numbers, whose bounds are established in Appendix~\ref{Proof of ocvering number all}. We choose $\epsilon=\frac12$ for physical states, $\epsilon=\frac14$ for sparse and low-rank states, and $\epsilon=\frac{1}{2n}$ for MPOs. For PEPOs, we choose $\epsilon_{\ul{a},\ul{b}}=\frac{1}{2n}$. In the last inequality, $C>0$ denotes a universal constant, and $N_{\textup{type}}$ represents the covering number associated with the underlying state class given in \Cref{tab:covering-numbers}.

Through taking
\begin{eqnarray}
\hat t = O\bigg(\sqrt{\frac{((1-a)\beta(Q,K) + a\alpha_2(Q,K)/K)(\log N_{\textup{type}})}{M}}\bigg) \cdot \|\wh\vrho-\vrho^\star\|_F, \  \setX_{\textup{PHY}}, \setX_{\textup{S}}, \setX_{\textup{LR}}, \setF_{\textup{LR}}, \setX_{\textup{MPO}},  \setX_{\textup{PEPO}},
\end{eqnarray}
with probability $1 - e^{-\Omega(\log N_{\textup{type}})}$, we further obtain
\begin{eqnarray}
\label{concentration inequality of fixed quantum state2}
    \sum_{q=1}^{Q}\sum_{k=1}^{K}\eta_{q,k} \< \mA_{q,k}, \wh\vrho - \vrho^\star \> \leq O\bigg(\sqrt{\frac{((1-a)\beta(Q,K) + a\alpha_2(Q,K)/K)(\log N_{\textup{type}})}{M}}\bigg) \cdot \|\wh\vrho-\vrho^\star\|_F.
\end{eqnarray}
Combining \eqref{necessary condition of global solution1}, \eqref{lower bound of the left term}, \eqref{upper bound of the right term} and \eqref{concentration inequality of fixed quantum state2}, we obtain
\begin{eqnarray}
\label{upper bound of solution for prior knowledge}
    \|\wh\vrho - \vrho^\star\|_F \leq O\bigg(\sqrt{\frac{((1-a)\beta(Q,K) + a\alpha_2(Q,K)/K)(\log N_{\textup{type}})}{(\alpha_1(Q,K))^2 (1-a)^2(1-\lambda)^2M}}\bigg), \ \ \setX_{\textup{PHY}}, \setX_{\textup{S}}, \setX_{\textup{LR}}, \setF_{\textup{LR}}, \setX_{\textup{MPO}},\setX_{\textup{PEPO}}.
\end{eqnarray}

We next extend the analysis to the remaining structured quantum-state classes. Specifically, we consider
\begin{itemize}

\item \textbf{Low-rank parameterizations of sparse states, MPOs, and PEPOs} Let $\mF$ belong to one of the sets $\setF_{\textup{LR-S}}$, $\setF_{\textup{LR-MPO}}$, or $\setF_{\textup{LR-PEPO}}$, and define $\vrho=\mF\mF^\dagger$. Then $\vrho$ can be regarded as an element of $\setX_{\textup{S}}$ with sparsity $t^2$, $\setX_{\textup{MPO}}$ with bond dimensions $[(r_1^{\textup{LR-MPO}})^2, \dots, (r_{n-1}^{\textup{LR-MPO}})^2 ]$, or $\setX_{\textup{PEPO}}$ with bond dimensions $[ (r_{\ul{1}\,\ul{1},\,\ul{1}\,\ul{2}}^{\textup{LR-PEPO}})^2,\dots,  (r_{\ul{q}\,\ul{p-1},\,\ul{q}\,\ul{p}}^{\textup{LR-PEPO}})^2 ]$, respectively. Consequently, applying \eqref{upper bound of solution for prior knowledge} with the corresponding structural parameters yields that, with probability $1 - e^{-\Omega(\log N_{\textup{type}})}$, we have
    \begin{eqnarray}
    \label{upper bound of solution for prior knowledge other three}
    \|\wh\vrho - \vrho^\star\|_F \leq O\bigg(\sqrt{\frac{((1-a)\beta(Q,K) + a\alpha_2(Q,K)/K)(\log N_{\textup{type}})}{(\alpha_1(Q,K))^2 (1-a)^2(1-\lambda)^2M}}\bigg), \ \ \setF_{\textup{LR-S}}, \setF_{\textup{LR-MPO}}, \setF_{\textup{LR-PEPO}},
    \end{eqnarray}
    where $\log N_{\textup{type}}$ is defined in \Cref{tab:covering-numbers}.

\item\textbf{MPSs and PEPSs} Finally, we consider the structured classes $\setF_{\textup{MPS}}$ and $\setF_{\textup{PEPS}}$. Although both MPSs and PEPSs can be regarded as special cases of MPOs and PEPOs, directly applying the preceding analysis would not fully exploit their additional low-dimensional structures. Instead, we derive sharper recovery guarantees by relating the stochastic term $\sum_{q=1}^{Q}\sum_{k=1}^{K}\eta_{q,k} \< \mA_{q,k}, \wh\vrho - \vrho^\star   \> = \sum_{q=1}^{Q}\sum_{k=1}^{K}\eta_{q,k} \< \mA_{q,k}, \wh\vf\wh\vf^\dagger - \vf^\star{\vf^\star}^\dagger   \>$ to the vector estimation error $\|\wh\vf - \vf^\star\|_2$. Recall the auxiliary parameter sets introduced in Appendix~\ref{Proof of ocvering number all}, $\ol\setF_{\textup{MPS},r^{\textup{MPS}}}$ and $\ol\setF_{\textup{PEPS},r^{\textup{PEPS}}}$, defined as $\ol\setF_{\textup{MPS},r^{\textup{MPS}}}  = \{ \vf\in\C^{d^n\times 1}:\  \|\vf\|_2 = 1, \vf(i_1 \cdots i_\nqbit) = \mX_1^{i_1}  \cdots \mX_\nqbit^{i_\nqbit}, \mX_\ell^{i_\ell}\in\C^{r_{\ell-1}^{\textup{MPS}}\times r_\ell^{\textup{MPS}}}, \ell\in[n], r_0^{\textup{MPS}}=r_n^{\textup{MPS}}=1  \}$ and $\ol\setF_{\textup{PEPS},r^{\textup{PEPS}}} = \{ \vf\in\C^{d^n\times 1}: \vf = [\mX_{\ul{1}\,\ul{1}}, \dots, \mX_{\ul{q}\,\ul{p}} ], \|\mX_{\ul{a}\,\ul{b}} \|_F\leq L_{\ul{a}\,\ul{b}}, \mX_{\ul{a}\,\ul{b}}\in\C^{d\times r_{\ul{a}\,\ul{b-1},\,\ul{a}\,\ul{b}}^{\textup{PEPS}}\times r_{\ul{a-1}\,\ul{b},\,\ul{a}\,\ul{b}}^{\textup{PEPS}} \times r_{\ul{a}\,\ul{b},\,\ul{a}\,\ul{b+1}}^{\textup{PEPS}}\times r_{\ul{a}\,\ul{b},\,\ul{a+1}\,\ul{b}}^{\textup{PEPS}}}, \ul{a}\in[q],  \ul{b}\in[p]  \} \cap \{\vf\in\C^{d^n\times 1}: \|\vf\|_2\leq \|\wh\vf - \vf^\star\|_2  \}$. Specifically, the stochastic term admits the following bound:
    \begin{eqnarray}
    \label{upper bound of the right term MPS and PEPS}
    \sum_{q=1}^{Q}\sum_{k=1}^{K}\eta_{q,k} \< \mA_{q,k}, \wh\vrho - \vrho^\star   \> \leq \begin{dcases}
    \|\wh\vf - \vf^\star\|_2 \cdot \max_{\vf\in\overline\setF_{\textup{MPS},2r^{\textup{MPS}}}} \sum_{q=1}^{Q}\sum_{k=1}^{K}\eta_{q,k} \< \mA_{q,k}, \wh\vf\vf^\dagger + \vf{\vf^\star}^\dagger   \>, &  \text{MPS}, \\
    \max_{\vf\in\overline\setF_{\textup{PEPS},2r^{\textup{PEPS}}}} \sum_{q=1}^{Q}\sum_{k=1}^{K}\eta_{q,k} \< \mA_{q,k}, \wh\vf\vf^\dagger + \vf{\vf^\star}^\dagger   \>, &  \text{PEPS}.
    \end{dcases}
    \end{eqnarray}
    Applying the same concentration argument as in \eqref{upper bound of the right term MPO} and \eqref{upper bound of the right term PEPO}, we obtain that, with probability at least $1 - e^{-\Omega(\log N_{\textup{type}})}$, we further obtain
    \begin{eqnarray}
    \label{concentration inequality of MPS PEPS}
    &\!\!\!\!\!\!\!\!&\sum_{q=1}^{Q}\sum_{k=1}^{K}\eta_{q,k} \< \mA_{q,k}, \wh\vrho - \vrho^\star \>\nonumber\\
    &\!\!\!\!\leq\!\!\!\!& O\bigg(\sqrt{\frac{((1-a)\beta(Q,K) + a\alpha_2(Q,K)/K)(\log N_{\textup{type}})}{M}}\bigg) \cdot \|\wh\vf - \vf^\star\|_2, \ \setF_{\textup{MPS}}, \setF_{\textup{PEPS}}.
    \end{eqnarray}
    Combining \eqref{necessary condition of global solution1}, \eqref{lower bound of the left term}, \eqref{upper bound of the right term}, together with the relation $\|\wh\vf - \vf^\star\|_2\leq \frac{1}{2(\sqrt{2}-1)}\|\wh{\vrho} - \vrho^\star\|_F$ established in \cite[Lemma 41]{ge2017no}, yields
    \begin{eqnarray}
    \label{upper bound of solution for prior knowledge MPS PEPS}
    \|\wh\vrho - \vrho^\star\|_F \leq O\bigg(\sqrt{\frac{((1-a)\beta(Q,K) + a\alpha_2(Q,K)/K)(\log N_{\textup{type}})}{(\alpha_1(Q,K))^2 (1-a)^2(1-\lambda)^2M}}\bigg), \ \ \setF_{\textup{MPS}}, \setF_{\textup{PEPS}}.
    \end{eqnarray}
\end{itemize}

By applying the inequality $\|\wh{\vrho} - \vrho^\star\|_1 \leq 2\sqrt{\rank(\vrho^\star)}\|\wh{\vrho} - \vrho^\star\|_F$~\cite{coles2019strong}, we can derive
    \begin{eqnarray}
    \label{upper bound of solution for prior knowledge summary appendix}
    \|\wh\vrho - \vrho^\star\|_1 \leq O\bigg(\sqrt{\frac{((1-a)\beta(Q,K) + a\alpha_2(Q,K)/K)\rank(\vrho^\star)(\log N_{\textup{type}})}{(\alpha_1(Q,K))^2 (1-a)^2(1-\lambda)^2M}}\bigg).
    \end{eqnarray}
This completes the proof.

\section{Proof of \Cref{label:sample complexity of biased solution}}
\label{sec: proof of sample complexity of bisaed solution}

The proof follows the same argument as that of Appendix~\ref{sec: proof of sample complexity of unbisaed solution}. Since $\wh\vrho$ is a global minimizer of \eqref{loss function of biased solution}, it satisfies
\begin{eqnarray}
\label{necessary condition of global solution biased}
g(\wh\vrho) \leq g(\vrho^\star).
\end{eqnarray}
Expanding \eqref{necessary condition of global solution biased} and substituting $\wh p_{q,k}=p_{q,k}+\eta_{q,k}$ yield
\begin{eqnarray}
\label{necessary condition of global solution1 biased}
\frac{1}{2QK}\sum_{q=1}^{Q}\sum_{k=1}^{K} \< \mA_{q,k},  \wh\vrho - \vrho^\star   \>^2 \leq \frac{1}{QK} \sum_{q=1}^{Q}\sum_{k=1}^{K}\eta_{q,k}\< \mA_{q,k},   \wh\vrho - \vrho^\star  \> + \frac{1}{QK} \sum_{q=1}^{Q}\sum_{k=1}^{K} L_{q,k}\< \mA_{q,k},   \wh\vrho - \vrho^\star  \>
\end{eqnarray}
with
\begin{eqnarray}
\label{definition of Lqk}
L_{q,k} = \bigg\<\mA_{q,k},  -\lambda\vrho^\star + \frac{\lambda}{d^n}\mId \bigg\> + \bigg\< -a\mA_{q,k} + \frac{a}{K}\mId,   \vrho^\star   \bigg\> + \bigg\<a\mA_{q,k} - \frac{a}{K}\mId,  \lambda\vrho^\star - \frac{\lambda}{d^n}\mId \bigg\>.
\end{eqnarray}
Applying \eqref{lower constant of second order information} yields
\begin{eqnarray}
\label{lower bound of left term biased}
\frac{1}{2QK}\sum_{q=1}^{Q}\sum_{k=1}^{K} \< \mA_{q,k},  \wh\vrho - \vrho^\star   \>^2 \geq \frac{\alpha_1(Q,K)}{2QK}   \|\wh\vrho - \vrho^\star\|_F^2.
\end{eqnarray}
The first term on the right-hand side of \eqref{necessary condition of global solution1 biased} can be bounded by applying the same argument as in Appendix~\ref{sec: proof of sample complexity of unbisaed solution}. Consequently, with probability at least $1-e^{-\Omega(\log N_{\textup{type}})}$, we obtain
\begin{eqnarray}
\label{upper bound of first right term biased}
    \sum_{q=1}^{Q}\sum_{k=1}^{K}\eta_{q,k} \< \mA_{q,k}, \wh\vrho - \vrho^\star \> \leq O\bigg(\sqrt{\frac{((1-a)\beta(Q,K) + a\alpha_2(Q,K)/K)(\log N_{\textup{type}})}{M}}\bigg) \cdot \|\wh\vrho-\vrho^\star\|_F.
\end{eqnarray}
For MPSs and PEPSs, the above bound follows by combining \eqref{concentration inequality of MPS PEPS} with the relation $\|\wh\vf - \vf^\star\|_2\leq \frac{1}{2(\sqrt{2}-1)}\|\wh{\vrho} - \vrho^\star\|_F$ established in \cite[Lemma 41]{ge2017no}.

Next, we analyze the second term on the right-hand side of \eqref{necessary condition of global solution1 biased}.
\begin{eqnarray}
\label{upper bound of second right term biased}
    &\!\!\!\!\!\!\!\!&\sum_{q=1}^{Q}\sum_{k=1}^{K} L_{q,k}\< \mA_{q,k},   \wh\vrho - \vrho^\star  \>\nonumber\\
    &\!\!\!\! = \!\!\!\!&  \sum_{q=1}^{Q}\sum_{k=1}^{K}\bigg\<\mA_{q,k},  -\lambda\vrho^\star + \frac{\lambda}{d^n}\mId \bigg\>\< \mA_{q,k},   \wh\vrho - \vrho^\star  \> + \bigg\< -a\mA_{q,k} + \frac{a}{K}\mId,   \vrho^\star   \bigg\>\< \mA_{q,k},   \wh\vrho - \vrho^\star  \>\nonumber\\
    &\!\!\!\!\!\!\!\!& + \bigg\<a\mA_{q,k} - \frac{a}{K}\mId,  \lambda\vrho^\star - \frac{\lambda}{d^n}\mId \bigg\>\< \mA_{q,k},   \wh\vrho - \vrho^\star  \>\nonumber\\
    &\!\!\!\! = \!\!\!\!&\sum_{q=1}^{Q}\sum_{k=1}^{K}\bigg\<\mA_{q,k},  -\lambda\vrho^\star + \frac{\lambda}{d^n}\mId \bigg\> \< \mA_{q,k},   \wh\vrho - \vrho^\star   \> - a \<\mA_{q,k},  \vrho^\star  \> \< \mA_{q,k},   \wh\vrho - \vrho^\star   \>\nonumber\\
    &\!\!\!\!\!\!\!\!& + \bigg\<a\mA_{q,k},  \lambda\vrho^\star - \frac{\lambda}{d^n}\mId \bigg\>\< \mA_{q,k},   \wh\vrho - \vrho^\star  \>\nonumber\\
    &\!\!\!\! \leq \!\!\!\!& \gamma_2(Q,K)(1-a)\trace\bigg((\wh\vrho - \vrho^\star) \bigg(\frac{\lambda}{d^n}\mId - \lambda\vrho^\star\bigg) \bigg) - a\gamma_1(Q,K)\trace( \vrho^\star(\wh\vrho - \vrho^\star))\nonumber\\
    &\!\!\!\! = \!\!\!\!& -\lambda\gamma_2(Q,K)(1-a)\trace( \vrho^\star(\wh\vrho - \vrho^\star)) - a\gamma_1(Q,K)\trace( \vrho^\star(\wh\vrho - \vrho^\star))\nonumber\\
    &\!\!\!\! \leq \!\!\!\!& \big( \lambda(1-a)\gamma_2(Q,K) + a\gamma_1(Q,K) \big)\|\wh\vrho - \vrho^\star\|_F,
\end{eqnarray}
where the second equality follows from $\sum_{q=1}^{Q}\sum_{k=1}^{K} \< \frac{a}{K}\mId,   \vrho^\star   \>\< \mA_{q,k},   \wh\vrho - \vrho^\star  \> = \< \frac{aQ}{K}\mId,   \vrho^\star   \>\trace(\wh\vrho - \vrho^\star) = 0$ and $\< \frac{a}{K}\mId,  \lambda\vrho^\star - \frac{\lambda}{d^n}\mId \> = \frac{a\lambda}{K}\trace(\vrho^\star - \frac{1}{d^n}\mId) =0$. The first inequality follows from the definitions of
$\gamma_1(Q,K)$ and $\gamma_2(Q,K)$ in \eqref{lower constant of second order information cross term} and \eqref{upper constant of second order information cross term}. The second inequality follows from $-\trace(\vrho^\star(\wh\vrho-\vrho^\star)) \leq |\trace( \vrho^\star(\wh\vrho - \vrho^\star))|\leq \|\vrho^\star\|_F\|\wh\vrho - \vrho^\star\|_F\leq \trace(\vrho^\star)\|\wh\vrho - \vrho^\star\|_F\leq \|\wh\vrho - \vrho^\star\|_F$.

Combining \eqref{necessary condition of global solution1 biased}, \eqref{lower bound of left term biased}, \eqref{upper bound of first right term biased} with \eqref{upper bound of second right term biased}, we have
\begin{eqnarray}
\label{biased error bound of trace norm}
    \|\wh{\vrho} - \vrho^\star\|_1 &\!\!\!\!\leq\!\!\!\!& 2\sqrt{\rank(\vrho^\star)}\|\wh{\vrho} - \vrho^\star\|_F\nonumber\\
    &\!\!\!\!\leq\!\!\!\!& O\bigg(\sqrt{\frac{((1-a)\beta(Q,K) + a\alpha_2(Q,K)/K)\rank(\vrho^\star)(\log N_{\textup{type}})}{(\alpha_1(Q,K))^2 M}}\bigg)\nonumber\\
    &\!\!\!\!\!\!\!\!& + \frac{4\sqrt{\rank(\vrho^\star)}\big( \lambda(1-a)\gamma_2(Q,K) + a\gamma_1(Q,K) \big)}{\alpha_1(Q,K)},
\end{eqnarray}
where the first inequality uses $\|\wh{\vrho} - \vrho^\star\|_1 \leq 2\sqrt{\rank(\vrho^\star)}\|\wh{\vrho} - \vrho^\star\|_F$~\cite{coles2019strong}. This completes the proof.


\end{document}